\documentclass[letterpaper,USenglish,
thm-restate,
numberwithinsect]{lipics-v2021}
\pdfoutput=1
\nolinenumbers

\titlerunning{Linked Fates}

\authorrunning{B.~Carleton et al.}

\hideLIPIcs
\Copyright{Jane Open Access and Joan R. Public} %

\ccsdesc[500]{Theory of computation~Complexity classes}

\keywords{ambiguity-bounded machines and classes,
ambiguity in nondeterministic computation, computational complexity,
structural complexity.} %

\usepackage{algorithm,algpseudocode}
\algrenewcommand\algorithmicrequire{\textbf{Precondition:}}

\usepackage{bm}

\def\desclabel#1{\bf #1\hfil}
\def\desc{\list{}{%
\setlength{\leftmargin}{0pt}
\labelwidth= \leftmargin
\advance \labelwidth by -\labelsep
\let \makelabel=\desclabel}}

\def\descHACKlabel#1{\bf #1\hfil}
\def\descHACK{\list{}{%
\setlength{\leftmargin}{0pt}
\labelwidth= \leftmargin
\advance \labelwidth by -\labelsep
\let \makelabel=\descHACKlabel}}

\newcounter{extremeleftlistcounter}
  {\begin{list}{\arabic{extremeleftlistcounter}~~~}{\usecounter{extremeleftlistcounter}%
        \setlength{\labelsep}{0pt}\setlength{\leftmargin}{0pt}%
        \setlength{\labelwidth}{0pt}\setlength{\listparindent}{0pt}}}%
  {\end{list}}

\newcounter{leftlistcounter}
  {\begin{list}{\arabic{leftlistcounter}~~~}{\usecounter{leftlistcounter}%
        \setlength{\labelsep}{0pt}\setlength{\leftmargin}{15pt}%
        \setlength{\labelwidth}{15pt}\setlength{\listparindent}{0pt}}}%
  {\end{list}}

\newlength{\filength}
\newsavebox{\gcbox}
\sbox{\gcbox}{\framebox[\filength]{\rule{0ex}{2ex}}}

\newtheorem{fact}[theorem]{Fact}
\newtheorem{property}{Property}

\newtheorem{positiveresult}{Positive Result}
\newtheorem{appendixnote}{Appendix Note}

\showhyphens{Faliszewski}

\newcount\hour  \newcount\minutes  \hour=\time  \divide\hour by 60
\minutes=\hour  \multiply\minutes by -60  \advance\minutes by \time
\def\mmmddyyyy{\ifcase\month\or Jan.\or Feb.\or Mar.\or Apr.\or May.\or June\or Jul.\or
  Aug.\or Sep.\or Oct.\or Nov.\or Dec.\fi\space\number\day, \number\year}
\def\hhmm{\ifnum\hour<10 0\fi\number\hour :%
  \ifnum\minutes<10 0\fi\number\minutes}

\makeatletter
\def\@citex[#1]#2{\if@filesw\immediate\write\@auxout{\string\citation{#2}}\fi
  \def\@citea{}\@cite{\@for\@citeb:=#2\do
    {\@citea\def\@citea{,\linebreak[0]}\@ifundefined
       {b@\@citeb}{{\bf ?}\@warning
       {Citation `\@citeb' on page \thepage \space undefined}}%
\hbox{\csname b@\@citeb\endcsname}}}{#1}}
\newcommand{\singlespacing}{\let\CS=
\@currsize\renewcommand{\baselinestretch}{1}\tiny\CS}
\newcommand{\singlespacingplus}{\let\CS=
\@currsize\renewcommand{\baselinestretch}{1.25}\tiny\CS}
\newcommand{\doublespacing}{\let\CS=
\@currsize\renewcommand{\baselinestretch}{1.75}\tiny\CS}
\newcommand{\extradoublespacing}{\let\CS=
\@currsize\renewcommand{\baselinestretch}{1.9}\tiny\CS}
\newcommand{\nicenicespacing}{\let\CS=
\@currsize\renewcommand{\baselinestretch}{1.9}\tiny\CS}
\newcommand{\draftspacing}{\let\CS=
\@currsize\renewcommand{\baselinestretch}{2.0}\tiny\CS}
\newcommand{\hugedraftspacing}{\let\CS=
\@currsize\renewcommand{\baselinestretch}{2.4}\tiny\CS}
\newcommand{\niceonespacing}{\let\CS=\@currsize\renewcommand{\baselinestretch}{1.1}\tiny\CS}
\newcommand{\nicetwospacing}{\let\CS=\@currsize\renewcommand{\baselinestretch}{1.2}\tiny\CS}
\newcommand{\nicethreespacing}{\let\CS=\@currsize\renewcommand{\baselinestretch}{1.3}\tiny\CS}
\newcommand{\singlespacingplusplus}{\let\CS=\@currsize\renewcommand{\baselinestretch}{1.35}\tiny\CS}
\newcommand{\nicefourspacing}{\let\CS=\@currsize\renewcommand{\baselinestretch}{1.4}\tiny\CS}
\newcommand{\nicefivespacing}{\let\CS=\@currsize\renewcommand{\baselinestretch}{1.5}\tiny\CS}
\newcommand{\nicesixspacing}{\let\CS=\@currsize\renewcommand{\baselinestretch}{1.6}\tiny\CS}
\newcommand{\nicesevenspacing}{\let\CS=\@currsize\renewcommand{\baselinestretch}{1.7}\tiny\CS}
\newcommand{\niceeightspacing}{\let\CS=\@currsize\renewcommand{\baselinestretch}{1.8}\tiny\CS}
\newcommand{\niceninespacing}{\let\CS=\@currsize\renewcommand{\baselinestretch}{1.9}\tiny\CS}
\makeatother%

\newcommand{\naturalnumber}{\ensuremath{{  \mathbb{N} }}}
\newcommand{\naturalnumberpositive}{\ensuremath{{  \mathbb{N}^+ }}}



\newcommand{\acc}{\ensuremath{{\text{\rm \#acc}}}}

\newcommand{\sat}{{\rm SAT}}

\newcommand{\np}{\ensuremath{\mathrm{NP}}}
\newcommand{\ip}{{\rm IP}}
\newcommand{\up}{\ensuremath{\mathrm{UP}}}
\newcommand{\upleq}[1]{\ensuremath{\up_{{\leq}{#1}}}}
\newcommand{\upbigohone}{\upleq{\bigo(1)}}
\newcommand{\upbigohlogn}{\upleq{\bigo(\log(n))}}

\newcommand{\fewp}{\ensuremath{\mathrm{FewP}}}
\newcommand{\coup}{{\rm coUP}}

\newcommand{\p}{\ensuremath{\mathrm{P}}}

\newcommand{\conp}{{\rm coNP}}
\newcommand{\pspace}{\ensuremath{\mathrm{PSPACE}}}

\newcommand{\ph}{\ensuremath{{\rm PH}}}

\newcommand{\ceiling}[1]{{{\lceil {#1} \rceil}}}
\newcommand{\floor}[1]{{{\lfloor {#1} \rfloor}}}

\newcommand{\manyone}{\ensuremath{\,\leq_{m}^{p}\,}}

\newcommand{\sigmastar}{\ensuremath{\Sigma^\ast}}

\newcommand{\pisnp}{\ensuremath{\p=\np}}

\newcommand{\calf}{\ensuremath{{\mathcal{F}}}}
\newcommand{\calC}{\ensuremath{{\mathcal{C}}}}

\newcommand{\bigo}{{\protect\mathcal{O}}}
\newcommand{\bigoh}{{\protect\mathcal{O}}}
\newcommand{\condition}{\,\mid \:}
 
\def\land{{\; \wedge \;}}

\newcommand{\change[1]}{\protect\footnote{{\bf CHANGED! The change is:} {#1}}}

\newcommand{\ntonplus}[1]{#1 : \naturalnumber \rightarrow \naturalnumberpositive}

\newcommand{\reals}{\mathbb{R}}
\newcommand{\positivereals}{\reals^{+}}
\newcommand{\oneupreals}{\reals^{\geq 1}}
\newcommand{\ntor}[1]{#1 : \naturalnumber \rightarrow \oneupreals}

\newcommand{\YearsSinceWatanabe}{\number\numexpr\year-1988}

\title{%
  Linked Fates: How Small of 
an Ambiguity Increase Can Make the Difference
  Between Equaling and Separating from P?}
\author{Benjamin Carleton\footnote{Work done in part while at the University of Rochester's Department of Computer Science.}}
{Department of Computer Science, Cornell University, Ithaca, NY 14850, USA}
{}{https://orcid.org/0000-0003-2369-4423}{}

\author{Michael C. Chavrimootoo\footnotemark[1]}{Department of Computer Science, Denison University, Granville, OH 43023, USA}{}{https://orcid.org/0000-0002-0510-6002}{}

\author{Lane A. Hemaspaandra}{Department of Computer Science, University of Rochester, Rochester, NY 14627, USA}{}{https://orcid.org/0000-0003-0659-5204}{}

\author{David E. Narv\'{a}ez\footnote{Work done in part while at the University of Rochester's Department of Computer Science and Virginia Tech's Bradley Department of Electrical and Computer Engineering.}}{Faculty of Mathematics and Physics, University of Ljubljana, Ljubljana, Slovenia}{}{https://orcid.org/0000-0003-3704-1060}{}

\author{Conor Taliancich\footnotemark[1]}{Property Matrix, Culver City, CA 90230, USA}{}{https://orcid.org/0000-0002-7419-0391}{}

\author{Melissa Welsh\footnotemark[1]}{Appian Corporation, 7950 Jones Branch Dr., McLean, VA 22102, USA}{}{https://orcid.org/0000-0003-4673-3376}{}

\funding{Work supported in part
  by NSF grants CCF-2006496 and 
  DUE-2135431, CIFellows grant CIF2020-UR-36, a Renewed Research Stay grant from the Alexander von~Humboldt Foundation, and the William G. and Mary Ellen Bowen Endowed Fund.}

\acknowledgements{%
We thank Lance Fortnow for his oracle pointer that is discussed in Section~\ref{s:related}.}

\begin{document}

\sloppy

\maketitle
\begin{abstract}
  Ambiguity-bounded versions of NP, denoted $\upleq{f(n)}$, bound by
  $f(n)$ the number of accepting paths the nondeterministic
  polynomial-time Turing machine can have on inputs of length $n$.
  Such classes range from Valiant's completely unambiguous ($f(n)=1$)
  class UP to NP itself, where there is no bound or, equivalently,
  there is the toothless exponential bound ($f(n) = 2^{n^{\bigo(1)}}$).

  This paper seeks to understand which of these classes stand and fall
  together as to whether they equal deterministic polynomial time.
  Informally put, what ranges of ambiguities have linked fates?  That
  is, for which pairs of 
  nondecreasing 
  functions, $(f_1 ,f_2)$,
  satisfying $(\forall n)[f_1(n) \leq f_2(n)]$, does it hold that
  $\p = \upleq{f_1(n)} \implies \p = \upleq{f_2(n)}$. More
  particularly, for which pairs does that hold robustly, i.e., 
  it holds in the
  real world and every relativized world?  And for which pairs
  does that implication fail to hold robustly, i.e., there is an oracle
  $A$ such that $\p^A = \upleq{f_1(n)}^A \subsetneq \upleq{f_2(n)}^A$?

  The only previously known positive result is Watanabe's  
  1988 
  result
  that $ \p = \upleq{1} \implies \allowbreak (\forall k \geq 1)[\p = \upleq{k}]$, which 
  even 
  holds robustly. His result, though lovely, applies only to constant-bounded ambiguities. As our positive result, we present a new class of cases (Theorem~\ref{t:positive-strengthening}) that apply (and even robustly apply) at greater ambiguity levels. 
  To give our class of cases, we leverage two approaches: 
  a novel path-poisoning approach that works even on superconstant ambiguities (Theorem~\ref{t:constant-addition})
  and 
  a new application
  of 
  the power of 
  padding 
  (Theorems~\ref{t:padding}/\ref{t:poly-stretch}).
  As negative results, we show
  that for essentially all other cases, no linkage holds robustly.

\end{abstract}

\section{Introduction}

The bounded-ambiguity class $\upleq{f(n)} \subseteq \np$ is the set of languages accepted by NPTMs having at most $f(n)$ accepting paths on each input (Section~\ref{s:defs} gives a more 
detailed definition).
For example, using ``1'' as a shorthand for the constant function $f(n)=1$,
one gets 
the 
class $\upleq{1}$, which was introduced in seminal work of Valiant~\cite{val:j:checking} (where it 
is called $\up$, which 
it still normally goes by)
and is often described as  capturing ``unambiguous nondeterministic computation.''
$\up$ has many surprising connections. 
Perhaps most notably, $\up$ has been central in characterizing whether 
worst-case one-way functions and one-way permutations exist.
We briefly discuss that relation in Section~\ref{s:defs} and refer interested readers to~\cite{%
gro-sel:j:complexity-measures,ko:j:operators,%
hem-rot:j:aowfs,hom-tha:j:owp}, or for a more self-contained overview of the relationship between ambiguity-bounded versions of $\np$ and one-way functions, see~\cite[Chapter~2]{hem-ogi:b:companion}. 
From a complexity perspective, we are primarily concerned with the relationship established by Watanabe (\cite{wat:j:hardness-one-way}; in that it is phrased in the language of one-way functions, rather than in the equivalent complexity-class form used here, see~\cite[Chapter~2]{hem-ogi:b:companion}) that for each $k \in \naturalnumberpositive$, $\p=\upleq{1} \iff \p = \upleq{k}$.

In particular, this paper is concerned with which pairs of functions yield ambiguity-bounded versions of $\np$ with ``linked fates,'' i.e., for which nondecreasing functions $f_1$ and $f_2$ such that $(\forall n \in \naturalnumber)[f_1(n) \leq f_2(n)]$ does it hold that $\p = \upleq{f_1(n)} \iff \p=\upleq{f_2(n)}$.

Watanabe's result~\cite{wat:j:hardness-one-way} settles the case where $f_1$ and $f_2$ are constant bounded, i.e., there is a $k \in \naturalnumberpositive$ such that for all $n \geq 0$, $f_1(n) \leq f_2(n) \leq k$.
However, over the last~\YearsSinceWatanabe\ years, there has been no success in strengthening the work of Watanabe beyond the constant case. 
For example, for what type of function $\ntor{f}$ does it hold that $\p=\up \implies \p=\upleq{f(n)}$? Beyond the obvious case where $f$ is constant-bounded, we first prove in this paper that for no 
monotonically 
nondecreasing and unbounded function $\ntor{f}$ can $\p=\up \implies \p=\upleq{f(n)}$ hold robustly, that is, for a given such function $f$  we provide an oracle $A$ such that $\p^A=\up^A \neq \upleq{f(n)}^A$.

As our positive result, we define a new types of linked fates, which we discuss below at an informal level. 
We will need the following notation.
For any class $\calf$ of
functions each mapping from $\naturalnumber$ to
$\oneupreals$,
$\upleq{\calf} = \{ L \condition (\exists f \in \calf)[L \in
\upleq{f(n)}]\}$. (Since we view big-Oh notations as giving a 
class of functions, and with that extending even through ``$+$''
operators/etc., having classes such as $\upleq{\bigoh(1)}$ or, as below, 
 $\upleq{f(n) + \bigoh(1)}$ is meaningful; in this notation, having a big-Oh 
 within a ``$\leq$'' is not a redundancy.)

\begin{positiveresult}[Theorem~\ref{t:positive-strengthening}]
For each monotonically nondecreasing function $\ntor{f}$ and each $k\in \naturalnumberpositive$, $\p=\upleq{f(n)} \iff \p=\upleq{f(n^k) + \bigoh(1)}$.
\end{positiveresult}

Our result highlights two things. 
First, this result implies Watanabe's result~\cite{wat:j:hardness-one-way}, but is not a direct generalization of Watanabe's approach. Indeed, the obvious way to generalize the technique used by Watanabe seems limited to functions $f$ where $f(n)$ is in $n^{\bigoh(1)}$, but our approach successfully bypasses that limitation.
Second, this result highlights that a polynomial increase in the argument to $f$ is not enough to separate from $\p$\@. 

The result itself relies on a combination of novel ``path-poisoning'' approach (Theorem~\ref{t:constant-addition}) and 
a simple, yet powerful, padding approach (Theorems~\ref{t:padding}/\ref{t:poly-stretch}).
Naturally, one might consider why the assumption on $f$ is important. We give the following example to demonstrate the importance of the assumption.
\begin{example}
Let $B$ by a fixed NP-complete set that has no strings at lengths 
that are perfect squares (i.e., $(\forall x \in B)[|x| \not\in \{0,1,4,9,16,\dots\}]$). It is easy to see  that 
such sets exist (see also Appendix Note~\ref{app:encoding}).

    Now consider the function $f(n) = \begin{cases}
        1 & \text{if $n$ is a perfect square} \\
        2^n & \text{otherwise}.
    \end{cases}$

Then $B \in
 \upleq{f(n)}$ and $\upleq{f(n^2) + \bigoh(1)} = \upleq{\bigoh(1)}$. 
    So (because (a)~since $B$ is NP-complete, we have that $B\in \upleq{f(n)} \implies (\pisnp \iff \p = \upleq{f(n)})$, and (b)~by Watanabe's result mentioned above,
    $\p=\upleq{\bigoh(1)}   \iff \p = \up$), 
    giving a proof that $\upleq{f(n)}$ and $\upleq{f(n^2)+\bigoh(1)}$ have linked fates is equivalent to giving a proof that  ``$\p = \up \iff \p=\np$,'' which is a longstanding open problem (see also our Conclusion section for more details). Further, since there is an oracle relative 
    to which $\p=\up\neq \np$~\cite{rac:j:rel}, it is impossible to 
    robustly prove that $\upleq{f(n)}$ and $\upleq{f(n^2)+\bigoh(1)}$ have linked fates.
\end{example}

We also give many applications of Theorem~\ref{t:positive-strengthening} via our Corollary~\ref{c:concrete-examples}.
Finally, we provide a flexible result (via Theorem~\ref{t:stretch-oracle}) that in one fell swoop shows that our linked fates, in addition to those of Watanabe~\cite{wat:j:hardness-one-way}, are the only ones that hold robustly, i.e., in the real world and in every relativized world.

\section{Definitions}\label{s:defs}
Let $\naturalnumber = \{0, 1, 2, \ldots\}$, let $\naturalnumberpositive = \{1, 2, 3, \ldots\}$, let $\reals$ denote the set of real numbers, let $\oneupreals = \{x \in \reals \mid x \geq 1\}$, and let $\positivereals = \{x \in \reals \mid x > 0\}$.
As is standard, given two functions $f$ and $g$ from $\naturalnumber$ to $\positivereals$, we say that $f(n) = \bigoh(g(n))$ exactly if there are positive natural numbers 
$c$ and $n_0$
such that $(\forall n \geq n_0)[f(n) \leq cg(n)]$~\cite{sip:b:introduction-third-edition}.
A function $f : \naturalnumber \rightarrow \positivereals$ is monotonically nondecreasing if $(\forall n_1, n_2 \in \naturalnumber)[n_1 \leq n_2 \implies f(n_1)\leq f(n_2)]$. Moreover, whenever we write $\log(\cdot)$, we mean $\log_2(\cdot)$.

For any nondeterministic polynomial-time Turing machine (NPTM) $N$
and any string $x$, 
$\acc_N(x)$ denotes the number of accepting computation paths
of $N$ on input $x$.
We will also use this notion when an oracle $A$ is present, e.g., $\acc_{N^A}(x)$.
As is standard,
$B \manyone C$ ($B$ polynomial-time many-one
reduces to $C$) means that there is a
polynomial-time computable function $h$ such that for
each $x$ it holds that $x\in B \iff h(x)\in C$. 

Valiant~\cite{val:j:checking} started the study of ambiguity-bounded
Turing machines by defining the class $\up$.  A language $L$ belongs
to $\up$ exactly if there is an
NPTM, $N$, accepting $L$ such that for each
$x$ it holds that $\acc_N(x)\leq 1$.  So when the machine accepts, it does so
on exactly one computation path. 

Since Valiant's seminal work, many other ambiguity-bounded classes have
been defined and studied, starting with the work of 
Allender and Rubinstein~\cite{all-rub:j:print}.
Varying notations have been used in the literature. 
The notation that we will use for such classes is
the very natural one due to 
Lange and
Rossmanith~\cite{lan-ros:c:up-circuit-and-hierarchy}:
If $f$ maps from $\naturalnumber$ to $\oneupreals$, then
$\upleq{f(n)}$ will denote all languages $L$ such that there exists an NPTM,
$N$, such that for each $x\notin L$ it holds that $\acc_N(x) = 0$,
and for each $x\in L$ it holds that $1 \leq \acc_N(x) \leq f(|x|)$.
Valiant's class $\up$ is, in this notation, $\upleq{1}$.

We will appeal to the following fact, which follows directly from the above definition of $\upleq{f(n)}$, to establish our positive results.

\begin{fact}\label{f:trivial}
  If $f_1$ and $f_2$ are functions from $\naturalnumber$ to $\oneupreals$ such that
  for all but finitely many values of $n\in \naturalnumber$ it holds that $f_1(n) \leq f_2(n)$, then $\p=\upleq{f_2(n)} \implies \p=\upleq{f_1(n )}$.
\end{fact}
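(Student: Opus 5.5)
The plan is to establish the class inclusion $\upleq{f_1(n)} \subseteq \upleq{f_2(n)}$ and then combine it with the trivial inclusion $\p \subseteq \upleq{f_1(n)}$. Assume $\p = \upleq{f_2(n)}$. Once the inclusion holds, we get
\[
\p \subseteq \upleq{f_1(n)} \subseteq \upleq{f_2(n)} = \p ,
\]
so $\p = \upleq{f_1(n)}$.

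The inclusion $\p \subseteq \upleq{f_1(n)}$ is immediate. A deterministic polynomial-time machine, viewed as an NPTM, has at most one accepting path. Since $f_1$ maps into $\oneupreals$, we have $1 \leq f_1(|x|)$, so that machine already witnesses membership in $\upleq{f_1(n)}$.

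The main step is the inclusion $\upleq{f_1(n)} \subseteq \upleq{f_2(n)}$. Take $L \in \upleq{f_1(n)}$, witnessed by an NPTM $N$. Let $n_0$ be such that $f_1(n) \leq f_2(n)$ for all $n \geq n_0$. The exceptional lengths $n < n_0$ are only finitely many strings, so I will handle them by a finite table.

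Build an NPTM $N'$ as follows. On input $x$ with $|x| < n_0$, $N'$ deterministically consults a hard-coded table of $L$ restricted to those strings. It accepts on exactly one path if $x \in L$ and rejects on all paths otherwise. On input $x$ with $|x| \geq n_0$, $N'$ simply simulates $N$.

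Now check the path counts.
\begin{itemize}
\item If $x \notin L$, then $\acc_{N'}(x) = 0$.
\item If $x \in L$ and $|x| < n_0$, then $\acc_{N'}(x) = 1 \leq f_2(|x|)$, again because $f_2$ maps into $\oneupreals$.
\item If $x \in L$ and $|x| \geq n_0$, then $1 \leq \acc_{N'}(x) = \acc_N(x) \leq f_1(|x|) \leq f_2(|x|)$.
\end{itemize}
Hence $L \in \upleq{f_2(n)}$.

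There is no real obstacle. The only point needing care is the finitely many exceptional lengths, and the finite patch together with $f_2 \geq 1$ takes care of it.
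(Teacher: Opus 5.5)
Your proof is correct and is the routine definitional argument the paper has in mind when it says the fact ``follows directly from the above definition.'' You also make explicit two details the paper leaves implicit: the inclusion $\p \subseteq \upleq{f_1(n)}$ (which uses $f_1 \geq 1$), and the finite-table patch for the finitely many exceptional lengths (which uses $f_2 \geq 1$).
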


For any class $\calf$ of
functions each mapping from $\naturalnumber$ to
$\oneupreals$,
$\upleq{\calf} = \{ L \condition (\exists f \in \calf)[L \in
\upleq{f(n)}]\}$.  
Among the existing ambiguity-bounded classes are 
the polynomial-ambiguity class 
$\fewp = \bigcup_{k \in \naturalnumberpositive} \upleq{n^k + k}$
of Allender
and Rubinstein~\cite{all-rub:j:print}; the log-bounded ambiguity class
$\upleq{\bigoh(\log(n))}$ and also the classes $\upleq{\bigoh(\sqrt{\log(n)})}$ and
$\upleq{\bigoh(\log(\log(n)))}$
of Hemaspaandra et al.~\cite{hem-juv-nad-phi:j:ics};
and the $k$-bounded ambiguity classes $\upleq{2}$, $\upleq{3}$,
\dots~of Beigel~\cite{bei:c:up1} (that paper used the 
slightly different notation $\up_2$, $\up_3$,~\dots\ for these classes), and the 
the
constant-bounded ambiguity class 
$\upleq{\bigoh(1)}$ of 
Hemaspaandra and
Zimand~\cite{hem-zim:tOutByJourExceptUPkStuffOnlyIsHere:balanced}
(who defined it as the union $\upleq{1} \cup 
\upleq{2} \cup \upleq{3} \cup \dots$, but that is easily 
seen to yield the same class as 
$\upleq{\bigoh(1)}$).
As
mentioned in the abstract, even NP is a toothless case of this notation,
namely 
$\np = \upleq{2^{n^{\bigo(1)}}}$.\footnote{%
There is a rather subtle issue at play in this paragraph. 
Suppose we were to speak of  the 
class $\upleq{\log(n)}$. Its definition blows up for $n=0$
and for $n=1$ it would exclude any set that had any strings 
of length one. The former is terrible, and the latter 
is unnatural and problematic since it excludes a countably infinite number of sets in $\p$ from $\upleq{\log(n)}$, and so even the statement $\p=\upleq{\log(n)}$ would be false. To remedy this situation, we implicitly assume in this work that ``$\log(\cdot)$''
when used in our ambiguity bounds is a shorthand for ``$\log(\max(2, \cdot))$.''}

Each of these classes is also motivated by the fact that the equality
of each such ambiguity-bounded class with P is characterizing whether
complexity-theoretic one-way functions exist whose ambiguity (i.e.,
whose limits on the number of preimages of each string in the range)
shares the same bound.  Without going into detail here about the
definitions of one-way functions and their ambiguity, we simply
mention 
the general correspondence that holds:
$\p \neq \upleq{f(n)}$ if and only if there exists an $f(n)$-to-1
one-way function 
(\cite{hem-juv-nad-phi:j:ics} states that general
case, but the specific cases started as early as Grollmann and
Selman's~\cite{gro-sel:j:complexity-measures} 
and Ko's~\cite{ko:j:operators} result that $\p\neq\upleq{1}$
if and only if one-to-one one-way functions exist;
see
also~\cite{all-rub:j:print,%
hem-zim:tOutByJourExceptUPkStuffOnlyIsHere:balanced} and 
\cite[Chapter 2]{hem-ogi:b:companion}).

For any claim regarding classes for which a generally agreed notion of
relativization
has been reached in the literature, we say
that the claim
holds \emph{robustly} if it holds for each oracle~$A$.  For example, if
we were to say that $\p \subseteq \np$ holds robustly---and indeed it
does---that would be asserting that for each oracle $A$ it holds that
$\p^A\subseteq \np^A$.  The vast majority of theorems, proofs, and proof
techniques used to study complexity classes relativize.  Thus, showing
that a claim holds in some relativized world establishes that a wide
range of proof techniques cannot prove that the claim fails in the
``real'' (i.e., unrelativized) world.  However, there exist 
proof techniques and results, e.g., arithmetization and
$\ip = \pspace$, that (in some cases controversially as to what the
``right''/``fair'' model of relativization is) seem not to
relativize~\cite{bab-for:j:arithmetization,cha-cha-har-ran-roh:j:revisionist,%
for-kar-lun-nis:j:ip,sha:j:ip,buh-for-thi:c:nonrelativizing,%
ver:j:oracle-survey,aar:url:MIPstar-nonrelativizing-blog-post-4512,%
ji-nat-vid-wri-yue:t:mip*=re}.

Throughout this paper, $M_1$, $M_2$, $M_3$,~$\ldots$ denotes some standard enumeration of polynomial-time deterministic (oracle) Turing machines, such that for all oracles $A$ it holds that the deterministic running time of $M_i^A$ 
on an input of size $n$ 
is bounded by $n^i+i$, and $N_1$, $N_2$, $N_3$,~$\ldots$ denotes some standard enumeration of polynomial-time nondeterministic (oracle) Turing machines, such that for all oracles $A$ it holds that the nondeterministic running time of $N_i^A$ 
on an input of length $n$
is bounded by $n^i+i$. We here are describing this standard setup (see, e.g., \cite{har-imm:cOUTDATEDseegur:c:compOUTDATED:complete,har-hem:j:up}) using almost verbatim the wording in Hemaspaandra, Jain, and Vereshchagin~\cite{hem-jai-ver:j:up-turing}. However, the idea underpinning why such enumerations exist---the idea of in effect pairing each machine with a set of polynomial running times such that every polynomial is majorized by at least one of them---is inspired by work going at least as far back as Sipser's paper on the nonexistence of complete sets for certain classes~\cite{sip:c:complete-sets}.

The function  $\ntonplus{e}$
is defined inductively as follows: $e(0) = 2$ and for each 
$n \in \naturalnumber$, $e(n+1) = 2^{2^{e(n)}}$. 
One could visualize 
$e$
as mapping from 
$\naturalnumber$
to certain towers of 2s, and so
when we speak of a ``tower length'' we 
mean a length $m$ that is mapped onto by~$e$.

A Turing machine that on each input has at most one accepting path is said to be a categorical Turing machine. Similarly,
for each $\ntor{f}$, for each oracle $A$, and for each 
nondeterministic oracle Turing machine $N$,
we will say
that $N^A$
is $f$-categorical if $N^A$ has at most $f(n)$ accepting paths on each input of length~$n$. Otherwise, $N^A$ is not $f$-categorical.
In an
abuse of language and terminology,
we also say that an oracle machine (resp.\ pairs of oracle machines) robustly has (resp.\ have) a property if that property holds 
relative to every oracle.%
\footnote{This notion subtly differs from our original notion of robustness, 
which
holds for statements about complexity classes, and here we are applying the notion to properties of oracle machines.}
So for example, ``nondeterministic oracle Turing machine $N_i$ is robustly $f$-categorical'' means that for every oracle $A$, $N_i^A$ is $f$-categorical.

\section{Linked and 
Unlinked Fates}\label{s:results}
In this section, we will discuss two differing approaches
to showing that pairs of classes have linked fates.
For each, we give relativized results showing that if one goes even
slightly beyond what that approach shows to be linked (and indeed
robustly linked), one can build oracles in which the classes' fates
fail to be linked.  That is, we show that our
positive results are optimal with respect to relativizable
proof techniques, i.e., with respect to robust results.

Our first type of linkage is one implicitly achieved by Watanabe, who showed that unambiguous one-way functions exist if and only
if constant-ambiguity one-way functions
exist~\cite{wat:j:hardness-one-way}.  Viewed through the lens of the
relationships between one-way functions and ambiguity-bounded
classes~\cite{gro-sel:j:complexity-measures,%
  hem-zim:tOutByJourExceptUPkStuffOnlyIsHere:balanced,hem-ogi:b:companion},
that equivalently establishes the following.
\begin{theorem}[\cite{wat:j:hardness-one-way}]\label{t:osamu}
For each $k \geq 1$, it holds that
$\p=\upleq{1} \iff \p=\upleq{k}$.
\end{theorem}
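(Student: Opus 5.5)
The backward direction is immediate from Fact~\ref{f:trivial}: since $1 \leq k$, $\p=\upleq{k}$ implies $\p=\upleq{1}$. So the work is the forward direction. Assume $\p=\upleq{1}$; we prove by induction on $j$ that $\p=\upleq{j}$ for every $j\geq 1$. The base case is the hypothesis. For the inductive step, assume $\p=\upleq{j-1}$ and fix $L\in\upleq{j}$ witnessed by an NPTM $N$ with running time $p(n)$ and at most $j$ accepting paths on every input. The goal is to show $L\in\p$.

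The key device is a set defined from $N$ and the polynomial-time ordering on paths. Let
\[
S=\{\langle x,y\rangle \condition \text{$y$ is a prefix of some accepting path of $N$ on $x$}\}.
\]
A priori $S$ is only in $\upleq{j}$, because $\langle x,y\rangle$ may have up to $j$ witnesses. To bring the ambiguity down to $j-1$, I would use the following set instead, and exploit the fact that distinct accepting paths can be compared:
\[
T=\{\langle x,y\rangle \condition \text{there is an accepting path of $N$ on $x$ that is lexicographically greater than $y$}\}.
\]
We must also handle the special case where $N$ has all $j$ accepting paths; this is the "full" case.

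The cleaner route, which I would commit to, is Watanabe-style: consider
\[
A=\{x \condition \acc_N(x)=j\}\qquad\text{and}\qquad B=\{x\condition \acc_N(x)\geq 1\}=L.
\]
Membership in $A$ is witnessed by the sorted $j$-tuple $\pi_1<\cdots<\pi_j$ of accepting paths. Since that tuple is unique when it exists, $A\in\upleq{1}=\p$. Now build a machine $N'$ that on input $x$ first deterministically checks whether $x\in A$. If so, $N'$ accepts on exactly one path. Otherwise $\acc_N(x)\leq j-1$, and $N'$ simulates $N$, having at most $j-1$ accepting paths. Hence $N'$ witnesses $L\in\upleq{\max(1,j-1)}=\upleq{j-1}$, which equals $\p$ by the induction hypothesis. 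This completes the induction.

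The main point to verify is that $A\in\upleq{1}$ holds with the correct polynomial bound. The machine guesses $j$ strings of length $p(|x|)$, checks that they are strictly increasing and each encodes an accepting path. Its running time is $\bigoh(j\,p(n))$, which is polynomial because $j$ is a constant. This constant-size tuple is precisely where the argument uses that the ambiguity is constant: for non-constant $j$ the tuple would be too long, or there would be too many induction steps. One small concern is the use of a $\p$ test inside an NPTM, but that is fine since $N'$ simply runs the deterministic decider for $A$ first. The argument relativizes verbatim, so it holds robustly.
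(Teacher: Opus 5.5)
Your proof is correct. The set $A=\{x \mid \acc_N(x)=j\}$ is accepted categorically by guessing a strictly increasing $j$-tuple of accepting paths. The number of such tuples is $1$ if $\acc_N(x)=j$ and $0$ otherwise, because $\acc_N(x)\le j$. Hence $A\in\up=\p$, and splitting $A$ off leaves a machine with at most $j-1$ accepting paths, which closes the induction. (The sets $S$ and $T$ in your first paragraph play no role and can be dropped.)

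The paper does not prove this theorem itself. It cites Watanabe and the Hemaspaandra--Ogihara treatment, describing the latter's approach as having an NPTM guess a constant number of distinct paths, which is what you do.

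The paper's own argument that recovers the theorem is Theorem~\ref{t:constant-addition} with $f\equiv 1$, and it takes a genuinely different route. It never isolates the inputs that attain the maximum path count. Instead it ``poisons'' the lexicographically smallest accepting path:
\begin{itemize}
\item The set $S$ of pairs $(x,p)$, where $p$ is an accepting path of $M$ on $x$ and some lexicographically smaller path is accepting, has ambiguity $\acc_M(x)-1$. So $S\in\p$ by the inductive hypothesis.
\item $R=\{x \mid \acc_M(x)>1\}$ is then in $\p$, by guessing a single path $p$ and testing $(x,p)\in S$ deterministically.
\item Finally, $L\in\up$: accept outright on $R$, and otherwise simulate $M$, which then has at most one accepting path.
\end{itemize}

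The trade-off is this. Your tuple-guessing step requires the machine to guess and sort as many paths as the ambiguity bound itself. That is harmless for constants, but at best extends to polynomially bounded, computable bounds. The poisoning argument only ever guesses one path beyond a given one, whatever the bound. This is why it works verbatim for $\upleq{f(n)+\bigoh(1)}$ with an arbitrary monotonically nondecreasing $f$, which is what the paper needs for Theorem~\ref{t:positive-strengthening}.
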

The treatment
in~\cite{hem-ogi:b:companion} proves that implication directly within
the language of ambiguity-bounded classes, rather than indirectly
through one-way functions.
Theorem~\ref{t:osamu} shows that for every two
constants $k_1 \in \naturalnumberpositive$ and
$k_2 \in \naturalnumberpositive$, $k_1 \leq k_2$, it holds that the
fates of $\upleq{k_1}$ and $\upleq{k_2}$ are linked.  Or, taking the
extreme case, it says that 
the fates of $\upleq{1}$ and $\upbigohone$ are linked:
$\p=\upleq{1} \iff \p=\upbigohone$.
All of those proofs relativize, so the results hold both in the real
world and in every relativized world.

We now show that that is as far as the linked fates go in this case.
We say that a function is unbounded if for each
$k\in \naturalnumberpositive$ there exists a
$j\in\naturalnumber$ such that $f(j) \geq k$.
For any function $\ntor{f}$ that is unbounded, one cannot robustly
prove that the corresponding class's fate is linked to that of $\upleq{1}$.
We require $f$ to be unbounded as otherwise, if it were bounded by a 
constant $k$,
by the fact that Theorem~\ref{t:osamu} relativizes, it would hold that for 
each oracle $A$, $\p^A = \up^A \implies \p^A=\upleq{k}^A=\upleq{f(n)}^A$,
thereby making our theorem impossible to prove (as it would be false).
We also require $f$ to be monotonically nondecreasing and in $2^{n^{\bigoh(1)}}$ so as to be able to appeal to Theorem~\ref{t:stretch-oracle} (and since $\np = 2^{n^{\bigoh(1)}}$, there are no obvious advantages to looking at cases where $f(n) \neq 2^{n^{\bigoh(1)}}$).

\begin{theorem}\label{t:oracle-super-constant}
	Let $\ntor{f}$ be a 
    monotonically nondecreasing
    function in $2^{n^{\bigoh(1)}}$ that is unbounded.  Then there exists an
	oracle $A$ such that $\p^A = \upbigohone^A \subsetneq \upleq{f(n)}^A$.
    Moreover, if the function $\floor{f(n)}$ is computable, then $A$ is recursive.
\end{theorem}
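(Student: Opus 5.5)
We will build $A$ by a stage construction over the tower lengths $e(0),e(1),e(2),\ldots$. It interleaves two kinds of requirements: a diagonalization that places a test language $L_A\in\upleq{f(n)}^A$ outside $\p^A$, and an encoding that forces every $\upbigohone^A$ language into $\p^A$.

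For the test language we take
\[
L_A=\{0^n \condition n \text{ is a tower length and } A\cap\Sigma^{n}\neq\emptyset\},
\]
and we promise that $A$ contains at most $\floor{f(n)}$ strings of length $n$ at each such tower length $n$. The NPTM that guesses a string of length $n$ and queries it then has at most $f(n)$ accepting paths, so $L_A\in\upleq{f(n)}^A$. Since $f$ is unbounded and nondecreasing, $f(n)\to\infty$. Hence, for each constant $k$, all but finitely many tower lengths $n$ satisfy $f(n)\ge k+1$. These are the lengths we use to kill the pair (deterministic machine $M_i$, ambiguity bound $k$).

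To get $\p^A=\upbigohone^A$, we use the standard coding approach. For each pair $(N_j,k)$ we try, during the stage, to make $N_j^A$ fail to be $k$-categorical on some input, which would ``kill'' the pair. Whenever such a kill is impossible, we encode the acceptance behaviour of $N_j^A$ into $A$ at large lengths that a polynomial-time machine can query. An alternative is to note that the constant-ambiguity part could be handled via Theorem~\ref{t:osamu} relativized, so that it suffices to get $\p^A=\up^A$ together with a separation of $\upleq{f(n)}$. For $\up$ collapses, the known technique (as in Rackoff-style oracles, $\p^A=\up^A\ne\np^A$) is this: at each stage, either some extension makes $N_j$ non-categorical (and we freeze that extension), or the machine is ``robustly categorical'' on all further extensions. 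In the latter case its accepting path on $x$ can be found in polynomial time with oracle $A$. The usual method is a lexicographic/path-query argument exploiting that two distinct accepting paths with disjoint-ish query sets could be combined to produce two accepting paths.

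Stage $s$ works at $n=e(s)$ and handles the requirement $R_i$: $M_i^A$ does not decide $L_A$, using the smallest $k$ for which $f(n)$ is large enough relative to the ambiguities currently being preserved. We simulate $M_i^A(0^n)$ with all unfixed strings of length $n$ out of $A$. If it accepts, we leave length $n$ empty. If it rejects, we add one unqueried string of length $n$, which exists by counting, since $n^i+i<2^n$ for large $n$. Adding a single string keeps the $f(n)\ge1$ promise. The tower spacing guarantees that $M_i$'s queries at length $e(s)$ are too short to matter for later stages and that the earlier stages' strings are brute-forcible. The danger is that adding this one string could break the categoricity of already-protected $N_j$. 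We control this by choosing among the $2^n$ candidate strings one that, for each protected $N_j$ with small inputs, does not create extra accepting paths. This is a counting argument: each accepting path of a protected machine queries polynomially many strings.

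The main obstacle is the $\p^A=\upbigohone^A$ collapse coexisting with the diagonalization. I would first try a Watanabe-style argument: if a protected $k$-ambiguous machine had its acceptance changed by adding our diagonalization string, then adding up to $k+1\le f(n)$ strings in different ways would yield more than $k$ accepting paths, contradicting protectedness. It therefore suffices to add strings that no protected computation's paths query, and to let $\p^A$ simulate protected machines by querying $A$ only at shorter lengths plus brute force. Computability of $\floor{f}$ makes each stage effective, so $A$ is recursive.
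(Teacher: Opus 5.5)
\textbf{There is a genuine gap: your construction never establishes $\p^A=\up^A$.}

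Your opening reduction is sound and is exactly the paper's first move. Since Theorem~\ref{t:osamu} relativizes, it suffices to build $A$ with $\p^A=\up^A\subsetneq\upleq{f(n)}^A$. The paper then gets such an $A$ by instantiating Theorem~\ref{t:stretch-oracle} with $f_1\equiv 1$ and $f_2=f$. Condition~(c) holds there because an unbounded nondecreasing $f$ eventually exceeds every $1+j$.

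Your from-scratch construction, however, does not prove the collapse, which you yourself flag as the main obstacle. The tools you sketch for it do not work as stated.

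First, the counting argument for choosing a diagonalization string at tower length $n$ that is harmless to every protected $N_j$ fails. The inputs whose computations reach length $n$ are all $x$ with $p_j(|x|)\geq n$, and there are exponentially many of them. Polynomially many bad strings per input can therefore exhaust $\Sigma^{=n}$. Your Watanabe-style observation (that $k+1\leq f(n)$ independent additions would yield $k+1$ accepting paths) only bounds the bad strings per input. Also, avoiding strings queried on accepting paths does not stop an added string from turning a rejecting path into an accepting one.

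Second, even if protected machines were immune to additions at length $n$, querying $A$ at shorter lengths plus brute force does not let a deterministic polynomial-time machine simulate a nondeterministic one. Inputs of length near $n^{1/j}$ cannot brute-force either $\Sigma^{=n}$ or the paths. You need something like the paper's \pspace-complete component $H$, so that $\p^H=\np^H$. Your alternative of coding $L(N_j^A)$ at fixed lookup positions clashes with the diagonalization, because $M_i$ may query those positions. The paper handles this by freezing queried strings and coding the complement redundantly: every unfrozen extension of the prefix encoding $(j,x,1^{p_j(|x|)})$ is added. That yields only an $\np$-lookup, $\coup_{\leq f_1(n)}^B\subseteq\np^B$. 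The paper then needs the $T_{i,j}$ requirements plus Lemma~\ref{l:har-hem} to get $\np^B\cap\conp^B=\p^B$.

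\textbf{Your separation step never uses that $f$ is unbounded.} It adds a single string per length. By the relativized Theorem~\ref{t:osamu}, any correct construction must fail for bounded $f$, so unboundedness has to enter essentially. Concretely, if the test lengths never held more than some constant $k$ strings, then $L_A\in\upleq{k}^A=\p^A$, and your diagonalization would defeat itself. With only your coding strings placed elsewhere, you would in fact get $L_A\in\up^A$. The paper's \textsc{Satisfy}-$R$ instead adds $\min\{\|N_E\|,\floor{f_2(n)}\}$ strings, at least $\floor{f_1(n)}+1$ of them, and its $S_i$ kills range over sets $U_E$ of size up to $f_2(n)$. Your earlier remark about reserving lengths with $f(n)\geq k+1$ for the pair $(M_i,k)$ points the right way, but it never makes it into the actual stage.

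\textbf{The recursiveness claim is also unsupported.} It does not follow merely from computability of $\floor{f}$ if a stage must decide whether a machine stays categorical on \emph{all} further extensions. The paper keeps each stage a finite search by restricting to $U_E\subseteq\Sigma^{=n}$ and $x\in\Sigma^{\leq n}$.
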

\begin{proof}
This result follows directly from 
Theorem~\ref{t:stretch-oracle} (which we give later), by considering the case where $f_1(n) = 1$ and $f_2(n) = f(n)$. These functions satisfy the requirements of Theorem~\ref{t:stretch-oracle}, 
and since the proof of Theorem~\ref{t:osamu} relativizes, it holds that there is an oracle $A$ that satisfies the conclusion of the current theorem. And if the function $\floor{f(n)}$ is computable, it follows that the oracle is recursive.
\end{proof}

The above result, Theorem~\ref{t:osamu} of Watanabe, is the only
result we know of in the literature that implies a linked-fates
situation for polynomial-time, ambiguity-bounded nondeterminism.
However, we now give a completely different family of such linked-fates
cases.  We do so via the power of padding---a technique that has been
central in complexity theory
in a  remarkably varied range of settings, such as  the elegant
construction  of universal  complete  sets  for many  complexity
classes~\cite{har:b:feasible-provable},    the    study    of    whether
all NP-complete sets  are  isomorphic  to  SAT~\cite{ber-har:j:iso},  and  
the connection between polynomial-time and exponential-time complexity
classes~\cite{boo:j:tally,har-imm-sew:j:sparse}---%
to reduce how
much nondeterminism is needed relative to the input length.

\begin{theorem}\label{t:padding}
  For each $k \in \naturalnumberpositive$ and each 
  function 
  $\ntor{f}$,
  $\upleq{f(n^k)} \manyone \upleq{f(n)}$.
\end{theorem}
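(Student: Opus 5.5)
The plan is to show that every language $L \in \upleq{f(n^k)}$ many-one reduces to some language $L' \in \upleq{f(n)}$. The reduction pads the input to length about $|x|^k$. Fix $L$ and an NPTM $N$ witnessing $L \in \upleq{f(n^k)}$, so that $\acc_N(x) = 0$ for $x \notin L$ and $1 \leq \acc_N(x) \leq f(|x|^k)$ for $x \in L$.

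Define the polynomial-time function $h(x) = x\#0^{m(x)}$, encoded over the binary alphabet. For concreteness, write $h(x) = x10^{m}$ and choose $m$ so that $|h(x)| = \max(|x|^k, |x|+1)$. Then let
\[
L' = \{ x10^{m} \condition x \in L \text{ and } |x10^m| = \max(|x|^k, |x|+1)\}.
\]
Clearly $x \in L \iff h(x) \in L'$, and $h$ is polynomial-time computable.

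Next, build an NPTM $N'$ for $L'$. On input $y$, it first deterministically checks that $y$ has the form $x10^m$ with the correct length; here $x$ is recovered as the prefix before the last $1$. If the check fails, $N'$ rejects on every path. Otherwise it simulates $N$ on $x$ and accepts exactly on $N$'s accepting paths. So $\acc_{N'}(y) = \acc_N(x)$ when $y$ is well-formed, and $\acc_{N'}(y) = 0$ otherwise. $N'$ runs in time polynomial in $|x| \leq |y|$.

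It remains to check the ambiguity bound. For $y \in L'$ of length $n$, the case $|x| \geq 2$ gives $n = |x|^k$, hence $\acc_{N'}(y) \leq f(|x|^k) = f(n)$, as required. The only obstacle is the small cases $|x| \in \{0,1\}$, where $|x|^k < |x|+1$, so $n = |x|+1$ instead of $|x|^k$. Since $f$ is not assumed monotone, $f(|x|^k)$ need not be at most $f(n)$. There are only finitely many such $x$, though. I would handle them by hardcoding: for $|x| \leq 1$, $N'$ deterministically decides membership of $x$ in $L$ via a finite table, and then accepts on exactly one path. This works because $f \geq 1$ everywhere, so one accepting path is always within the bound $f(n)$. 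With that modification, $L' \in \upleq{f(n)}$, and the reduction witnesses $\upleq{f(n^k)} \manyone \upleq{f(n)}$. If exact-length padding is inconvenient for $k=1$, the same construction with the $\max$ already covers it.
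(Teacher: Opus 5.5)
\textbf{Gap at $k=1$.} Your construction is correct for $k\geq 2$, but your final sentence is false: for $k=1$ the $\max$ does not rescue anything. When $k=1$ you have $\max(|x|^k,|x|+1)=|x|+1$ for \emph{every} $x$. So $h(x)=x1$ always has length $|x|+1$ and never $|x|^k$. All you know is $\acc_{N'}(x1)=\acc_N(x)\leq f(|x|)$, while you need $\acc_{N'}(x1)\leq f(|x|+1)$. Since $f$ is arbitrary, this can fail. For example, let $f$ be $2$ on even lengths and $1$ on odd lengths, and let $N$ have two accepting paths on some even-length $x\in L$. Then your $N'$ has two accepting paths on the odd-length string $x1$, violating the bound. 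The finite-table patch cannot help, because now every input is exceptional, not just finitely many. The case $k=1$ is trivial on its own ($L\manyone L$ via the identity map), so you only need to split it off explicitly.

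\textbf{Comparison with the paper for $k\geq 2$.} Your route differs from the paper's only in how the padded string is parsed. The paper maps $x$ to $x\cdot a^{|x|^k-|x|}$ with no separator and recovers $x$ from $y$ by length alone. Since $n\mapsto n^k$ is injective, $|y|$ determines $|x|$, and $x$ is the prefix of that length. The output length is then exactly $|x|^k$ for every $x$ and every $k\geq 1$; the padding is simply empty when $|x|\leq 1$ or $k=1$. So the paper needs no exceptional inputs, no $\max$, no table, and treats $k=1$ uniformly. It even works over a one-letter alphabet. Your separator costs one extra symbol, and that symbol is exactly what forces the $\max$, the patch for $|x|\leq 1$, and the failure at $k=1$. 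In exchange, your parse (find the last $1$) is purely local. Because $f$ is not assumed monotone, any mismatch between $|h(x)|$ and $|x|^k$ is fatal unless it occurs on only finitely many inputs.
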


\begin{proof}
  Let $L$ belong to $\upleq{f(n^k)}$.  Let $a$ be a character in
  $L$'s alphabet, $\Gamma$.

  We claim that
  $L \manyone \{x \cdot a^{|x|^k - |x|} \condition x \in L\}$, where
  $\cdot$ denotes concatenation, and $a^j$ denotes the string
  consisting of $j$ many ``$a$''s.  In particular, the polynomial-time
  many-one reduction simply is the mapping from $x$ to
  \mbox{$x \cdot a^{|x|^k - |x|}$}.
    
  However,
  $\{x \cdot a^{|x|^k - |x|} \condition x \in L\} \in \upleq{f(n)}$.
  In particular, on an arbitrary input $y$, the $\upleq{f(n)}$ machine
  will parse the input into $x$ and $a^{|x|^k - |x|}$, or if the input
  cannot be parsed as that, will reject as the input is clearly not in
  $\{x \cdot a^{|x|^k - |x|} \condition x \in L\}$.  Since $n^k-n$ is
  nondecreasing, it is impossible that a given input string can have
  two different $x$ values (say, at different lengths) that can be
  parsed as being its $x$ value.  Our $\upleq{f(n)}$ machine then
  simulates the $\upleq{f(n^k)}$ machine for $L$, running
  on input $x$.  Since
  $| x \cdot a^{|x|^k - |x|} | = |x|^k$, the $f(|x|^k)$ ambiguity
  bound that the $\upleq{f(n^k)}$ machine for $L$ will have in its
  simulated run ensures that our machine will itself satisfy an $f(n)$
  bound relative to its input size, since its input size itself
  is~$|x|^k$.
\end{proof}

\begin{restatable}{theorem}{polystretch}
\label{t:poly-stretch}
  For each $k \in \naturalnumberpositive$ and each 
  monotonically nondecreasing function $\ntor{f}$,
  it holds  that $\p = \upleq{f(n)} \iff \p = \upleq{f(n^k)}$.
\end{restatable}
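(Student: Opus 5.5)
The proof splits into the two directions of the biconditional, each using one earlier result.

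\emph{Direction $\p = \upleq{f(n^k)} \implies \p = \upleq{f(n)}$.} Since $f$ is monotonically nondecreasing and $n \leq n^k$ for every $n \in \naturalnumber$ (including $n=0$, as $k\geq 1$), we have $f(n) \leq f(n^k)$ for all $n$. Hence $\upleq{f(n)} \subseteq \upleq{f(n^k)}$, which is the content of Fact~\ref{f:trivial} with $f_1(n)=f(n)$ and $f_2(n)=f(n^k)$. So if $\p = \upleq{f(n^k)}$, then $\p \subseteq \upleq{f(n)} \subseteq \upleq{f(n^k)} = \p$. Here we use that $\p \subseteq \upleq{f(n)}$, since $f \geq 1$ and a deterministic machine has at most one accepting path.

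\emph{Direction $\p = \upleq{f(n)} \implies \p = \upleq{f(n^k)}$.} Let $L \in \upleq{f(n^k)}$. Theorem~\ref{t:padding} shows more than the bare statement $\upleq{f(n^k)} \manyone \upleq{f(n)}$: it gives a specific set $L' = \{x \cdot a^{|x|^k-|x|} \condition x \in L\}$ with $L' \in \upleq{f(n)}$ and $L \manyone L'$ via the polynomial-time map $x \mapsto x\cdot a^{|x|^k-|x|}$. Under the hypothesis $\p = \upleq{f(n)}$, we get $L' \in \p$. Since $\p$ is closed downward under $\manyone$, it follows that $L \in \p$. Therefore $\upleq{f(n^k)} \subseteq \p$. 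The reverse inclusion $\p \subseteq \upleq{f(n^k)}$ holds because $f(n^k) \geq 1$.

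The only point needing care is the reading of Theorem~\ref{t:padding}. The notation ``$\upleq{f(n^k)} \manyone \upleq{f(n)}$'' must mean that every set in the first class reduces to \emph{some} set in the second class. That is exactly what its proof constructs, and it is all the downward-closure argument needs. No further obstacle is expected. Monotonicity is used only in the easy direction, and both arguments relativize, so the equivalence holds robustly.
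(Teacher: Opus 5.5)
Your proof is correct and follows essentially the same route as the paper. The left-to-right direction uses the padding reduction of Theorem~\ref{t:padding} together with downward closure of $\p$ under $\manyone$. The right-to-left direction uses $f(n) \leq f(n^k)$, from monotonicity, together with Fact~\ref{f:trivial}.
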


Theorem~\ref{t:poly-stretch} provides the first result on ``linked fates'' with respect to ambiguity-bounded versions of $\np$ since the 1988 work of Watanabe~\cite{wat:j:hardness-one-way}.
For more common functions such as $n$ and $2^n$, this padding argument yields even more dramatic collapses (see Corollary~\ref{c:concrete-examples}), because, informally, increasing the input size effectively permits us to have more accepting paths. However, when dealing with slow-growing functions (e.g., $\log(\log(\log(n)))$), padding does not seem to buy us much. In our next theorem, we show how to get an additional constant number of accepting paths ``for free.''

In the proof of Theorem~\ref{t:constant-addition} we observe that $\upleq{f(n) + \bigoh(1)} = \bigcup_{j \in \naturalnumber} \upleq{f(n)+j}$, and so the proof of that theorem
is established by induction on $j$, which is interesting in its own right due to the nature of the argument used in the inductive step. We give an intuitive overview of that argument at the start of the inductive step, but in a nutshell, the idea is to ``poison'' an accepting path of an NPTM $M$ accepting a language in $\upleq{f(n)+j+1}$,
and then use the assumption that $\p=\upleq{f(n)+j}$ in a two-step ``attack'' that leverages the ``poisoning'' insight to drive the conclusion that $L(M) \in \p$.

\begin{theorem}\label{t:constant-addition}
  For each monotonically nondecreasing function $\ntor{f}$, $\p = \upleq{f(n)} \iff \p = \upleq{f(n)+\bigoh(1)}$.
\end{theorem}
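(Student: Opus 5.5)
The direction from right to left is immediate from Fact~\ref{f:trivial}, since $f(n) \leq f(n)+j$ for all $n$. For the forward direction I would write $\upleq{f(n)+\bigoh(1)} = \bigcup_{j\in\naturalnumber}\upleq{f(n)+j}$. (Each $g \in f(n)+\bigoh(1)$ is eventually at most $f(n)+j$ for some $j$, and the finitely many exceptional lengths can be absorbed into the constant.) I would then prove by induction on $j$ that $\p=\upleq{f(n)}$ implies $\p=\upleq{f(n)+j}$. The base case $j=0$ is the hypothesis itself. For the inductive step, assume $\p = \upleq{f(n)+j}$, and let $L \in \upleq{f(n)+j+1}$ be witnessed by an NPTM $M$ whose computation paths are encoded as strings of length $p(|x|)$ for some polynomial $p$.

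The key step is to ``poison'' the lexicographically smallest accepting path. First, let
\[
D = \{\langle x, w\rangle \condition w \text{ is an accepting path of } M(x) \text{ and some accepting path } q <_{\mathrm{lex}} w \text{ of } M(x) \text{ exists}\}.
\]
A machine for $D$ deterministically checks that $w$ is accepting and then guesses $q<_{\mathrm{lex}} w$, accepting if $q$ is accepting. Since $w$ itself is excluded from the guesses, this machine has at most $f(|x|)+j$ accepting paths. Because $f$ is nondecreasing and $|\langle x,w\rangle| \geq |x|$, this is at most $f(|\langle x,w\rangle|)+j$, so $D\in\upleq{f(n)+j}=\p$.

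Second, let $E=\{x \condition M(x) \text{ has at least two accepting paths}\}$. The machine for $E$ guesses a path $w$ and accepts exactly when $w$ is accepting and $\langle x,w\rangle\in D$, where membership in $D$ is now tested deterministically. Its accepting paths are exactly the accepting paths of $M(x)$ other than the lexicographically first one, so there are at most $f(|x|)+j$ of them. Hence $E\in\p$.

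Finally, let $L'=L\cap\overline{E}$. A machine for $L'$ rejects if $x\in E$ (a polynomial-time test) and otherwise simulates $M$. On any input it has at most $1\le f(|x|)$ accepting paths, so $L'\in\upleq{f(n)+j}=\p$. Since $L = E \cup L'$, this gives $L\in\p$, which completes the induction.

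I expect the main point needing care to be the length bookkeeping. The ambiguity bound for the auxiliary languages is measured in the length of the padded input $\langle x,w\rangle$, which is why monotonicity of $f$ is needed. I would use a pairing function with $|\langle x,w\rangle|\ge |x|$, and make sure the machines reject malformed inputs, including any $w$ not of length $p(|x|)$. Everything in the argument relativizes, so the equivalence also holds robustly.
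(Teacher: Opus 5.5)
Your proposal is correct and follows the paper's proof essentially step for step. Your $D$ and $E$ are the paper's $S$ and $R$: you poison the lexicographically smallest accepting path and use monotonicity of $f$ to handle the length of the pair. Writing $L = E \cup (L\cap\overline{E})$ is only a repackaging of the paper's single machine, which accepts outright when $x \in R$ and otherwise simulates $M$ with at most one accepting path.
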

\begin{proof}
    Let $f$ satisfy the conditions of the theorem.
    The right-to-left direction holds by Fact~\ref{f:trivial}. The rest of this proof is about the left-to-right direction. Assume that $\p = \upleq{f(n)}$.
    Given  the definition of big-Oh, it is easy to see that $\upleq{f(n) + \bigoh(1)} = \bigcup_{j \in \naturalnumber} \upleq{f(n)+j}$. So let $j \in \naturalnumber$. It suffices to show that $\p = \upleq{f(n)+j}$, which we prove by induction on $j$.

    The base case (when $j=0$) is trivial, so let us now assume that $j > 0$.
    For the inductive step, we let the inductive hypothesis be $\p = \upleq{f(n)+j}$ and assume it is true. We will show that $\p=\upleq{f(n)+j+1}$.
    Pick an arbitrary $L \in \upleq{f(n) + j +1}$, and let $M$ be an NPTM that accepts $L$ and has at most $f(n)+j+1$ accepting paths on each input of size $n$, i.e., $(\forall x)[(x \in L \implies 1 \leq \acc_M(x) \leq f(|x|) +j+1) \land (x\not\in L \implies \acc_M(x) = 0)]$). 

    Informally, the intuition for the two-step approach we employ is as follows. We wish to ``poison'' exactly one accepting path of $M$ for each input $x\in L$ with $\acc_M(x) > 1$---that is, we wish to have exactly one accepting path of $M$ on $x$ become a rejecting path whenever $M$ has more than one accepting path on input $x$. In doing so, we would be decreasing $M$'s number of accepting paths on each input with multiple accepting paths (without changing the language it accepts), thus bounding the number of accepting paths by $f(|x|) + j$ for each $x\in L$. This is however not a 
    realistic goal
    as doing so seems to require the NPTM to know when it has more than one accepting path. But we can draw insight from this scenario.  Consider now this informal description of the machine $M'$ whose behavior is the same as that of $M$, except that on each input $x\in L$, it has exactly one of its accepting path(s) ``poisoned'' (and we do not specify the ``how''). The language of $M'$ is 
     $R = \{x \in L \mid \acc_M(x) > 1\}$,
    which is a generalization of one used to prove Theorem~\ref{t:osamu} in~\cite{hem-ogi:b:companion}.  Our biggest task is to prove that $R\in\p$ under the current assumptions, which will be crucial to us in proving that $L \in \p$. 
    Unfortunately, we cannot use the same approach used by~\cite{hem-ogi:b:companion}; in that approach, an NPTM only needs to guess a constant number of distinct paths, but here, the number of paths may be too large for an NPTM to guess. So we use a different approach to prove that $R \in \p$, and then use that fact to conclude that $L\in\p$.
    
    To prove that $R \in \p$, we first prove that the language $S$ (defined below) is in $\p$.\footnote{In this proof, we tacitly assume without loss of generality that we have access to a polynomial-time computable total order over the accepting paths of Turing machines. Indeed, one can encode an accepting path as a 
    string over a finite alphabet (with at least two elements) using only polynomial amount of time (and thus space). And the standard lexicographical ordering over those strings is a total order that satisfies our assumption.}
        $S = \{(x, p) \mid x \in L$ and $p$ is an accepting path of $M$ on $x$ and $M$ has an accepting path on input $x$ that is lexicographically smaller than $p\}$.

    \begin{lemma}\label{l:s-in-p}
        $S \in \p$.
    \end{lemma}
    \begin{proof}
    Let $N_S$ be an NPTM that does the following on input $z$.
    If $z$ is not a pair, reject. Otherwise, let $(x, p)$ be the pair represented by $z$. If $p$ is an accepting path of $M$ on $x$, then guess a path $p'$ of $M$ on $x$ that is lexicographically smaller than $p$ (if no such path exists, then reject). If $p'$ is an accepting path, then we accept on the current path. If $p'$ is not an accepting path, then we reject on the current path.

    We first argue correctness. 
    If $N_S$ accepts $z$, it must be that $z = (x, p)$ for some $x \in L$ and some accepting path $p$ of $M$ on $x$, and that $M$ on input $x$ has an accepting path $p'$ that is lexicographically smaller than $p$, i.e., $z \in S$.
    If $z\in S$, then $z$ is a pair, let us call it $(x, p)$. The machine $N_S$ verifies that $x \in L$ using $p$, and nondeterministically guesses a path $p'$ of $M$ on $x$ that is lexicographically smaller than $p$. Since $(x, p) \in S$, there is such a path $p'$ that is accepting, and so $N_S$ accepts.

    To conclude, it is clear that $N_S$ runs in nondeterministic polynomial time and that for any $z=(x, p) \in S$, 
    $\acc_{N_S}(z) = \acc_M(x) - 1$, 
    because the (lexicographically) smallest accepting path of $M$ on $x$ is ``poisoned.'' Equivalently, this means that $\acc_{N_S}(z) \leq f(|x|) + j$. Next, because $f$ is monotonically nondecreasing and $|x| \leq |z|$, we obtain that $\acc_{N_S}(z) \leq f(|z|) + j$. Therefore $S \in \upleq{f(n)+j} = \p$ (by our inductive hypothesis).
    \end{proof}

    \begin{lemma}\label{l:r-in-p}
        $R\in\p$.
    \end{lemma}
    \begin{proof}
    Let $N_R$ be an NPTM that does the following on input $x$. 
    Guess a path $p$ of $M$ on $x$. If $p$ is not an accepting path, reject.
    If $(x, p) \in S$, then accept. Otherwise, reject.

    Because $S \in \p$, the membership test of $(x, p)$ in $S$ can be computed in polynomial time. So $N_R$ clearly runs in nondeterministic polynomial time. Also, for each input $x\in L$, there are only up to $f(|x|) + j$ accepting paths, and $N_R$ accepts if and only if $M$ has more than one accepting path on $x$, so it follows that $L(N_R) = R \in \upleq{f(n)+j} = \p$ (by our inductive hypothesis).
    \end{proof}

    Now to conclude, consider the NPTM $N_L$ that does the following on input $x$.
    If $x \in R$ (which can be computed in polynomial time by Lemma~\ref{l:r-in-p}), accept. Otherwise, immersively simulate $M$ on input $x$.

    $N_L$ clearly runs in nondeterministic polynomial time, accepts $L$, and has at most one accepting path on any input. So, $L \in \up=\p$.
\end{proof}

Our oracle from Theorem~\ref{t:oracle-super-constant} gives us an indication that any improvement to Theorem~\ref{t:constant-addition}
(taking the case where $f(n)$ is constant bounded) 
cannot hold robustly.
However, 
we can combine Theorems~\ref{t:poly-stretch} and~\ref{t:constant-addition} to get the following result.

\begin{restatable}{theorem}{positivestrengthening}
\label{t:positive-strengthening}
      For each 
      monotonically nondecreasing
      function $\ntor{f}$
      and each $k \in \naturalnumberpositive$, $\p = \upleq{f(n)} \iff \p = \upleq{f(n^k)+\bigoh(1)}$.
\end{restatable}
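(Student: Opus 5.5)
The plan is to chain Theorems~\ref{t:poly-stretch} and~\ref{t:constant-addition}, applying the latter not to $f$ itself but to the stretched function $g(n) = f(n^k)$. Fix a monotonically nondecreasing $\ntor{f}$ and $k \in \naturalnumberpositive$, and set $g(n) = f(n^k)$.

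First I check that $g$ meets the hypotheses of Theorem~\ref{t:constant-addition}. It maps $\naturalnumber$ to $\oneupreals$ because $f$ does. It is monotonically nondecreasing because $n \mapsto n^k$ is nondecreasing on $\naturalnumber$ and $f$ is nondecreasing. Theorem~\ref{t:constant-addition} then gives
\[
\p = \upleq{f(n^k)} \iff \p = \upleq{f(n^k) + \bigoh(1)}.
\]
Theorem~\ref{t:poly-stretch}, applied to $f$ and $k$, gives
\[
\p = \upleq{f(n)} \iff \p = \upleq{f(n^k)}.
\]
Composing the two biconditionals yields $\p = \upleq{f(n)} \iff \p = \upleq{f(n^k)+\bigoh(1)}$, which is the claim.

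The only step needing care is one of notation: I must confirm that the class $\upleq{f(n^k)+\bigoh(1)}$ in the statement is literally the class $\upleq{g(n)+\bigoh(1)}$ to which Theorem~\ref{t:constant-addition} refers. Both are $\bigcup_{j} \upleq{f(n^k)+j}$ by the convention for big-Oh classes stated before Theorem~\ref{t:positive-strengthening}, so they coincide.

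The real content, and thus the main obstacle, lies in Theorem~\ref{t:poly-stretch}, which I am allowed to assume. Were I to reprove its forward direction, I would take $L \in \upleq{f(n^k)}$ and use Theorem~\ref{t:padding} to get $L \manyone L'$ for some $L' \in \upleq{f(n)} = \p$. Closure of $\p$ under $\manyone$ then gives $L \in \p$. The reverse direction follows from Fact~\ref{f:trivial}, since $f(n) \le f(n^k)$ for all $n$ by monotonicity.

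Both theorems relativize, so the composed result also holds robustly.
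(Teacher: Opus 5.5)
Your proposal is correct and matches the paper's proof. For the forward direction, the paper likewise applies Theorem~\ref{t:poly-stretch} and then Theorem~\ref{t:constant-addition} to $f(n^k)$; you also make explicit that $n \mapsto f(n^k)$ is monotonically nondecreasing, which the paper uses only tacitly. The one cosmetic difference is that the paper gets the reverse direction in a single step from Fact~\ref{f:trivial}, using $f(n^k)+j(n) \geq f(n)$, whereas you compose the two reverse directions; the two are equivalent.
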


As a corollary to Theorem~\ref{t:positive-strengthening}, we obtain the following concrete examples of linked fates.

\begin{corollary}\label{c:concrete-examples}~
  \begin{enumerate}
  \item 
    $\p = \upleq{\log\log(n)} \iff \p =
    \upleq{\bigoh(1) + \log\log(n)}$.\footnote{Though it is
      uglier, we can even from the previous theorem
      draw the slightly stronger conclusion that 
for each $j \in \naturalnumber$, it holds that
    $\p = \upleq{\max(\log\log(n) - j\,,\, 1)} \implies \p =
    \upleq{\bigoh(1) + \log\log(n)}$.}

  \item $\p = \upleq{\log(n)} \iff \p = \upbigohlogn$.

\item 
  For each $k \in \naturalnumberpositive$,
    $\p = \upleq{n^k + k} \iff \p = \fewp$.

  \item \label{i:nlogn-collapse} 
  For each $j\in\naturalnumber$,
  $\p = \upleq{j+ n^{\log(n)}} \iff \p = 
  \upleq{n^{\bigoh(\log(n))}}$.

  \item \label{i:exp-collapse} 
  For each $k \in \naturalnumberpositive$,
  $\p = \upleq{2^{n^k}} \iff \p = \np$.
  \end{enumerate}
\end{corollary}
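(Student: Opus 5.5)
Each of the five items will be obtained by instantiating Theorem~\ref{t:positive-strengthening} (together, where needed, with Theorems~\ref{t:poly-stretch} and~\ref{t:constant-addition}) with a concrete monotonically nondecreasing $f$ and $k$, and then using Fact~\ref{f:trivial} to replace the resulting class by the one in the statement. In every item the right-to-left direction is immediate from Fact~\ref{f:trivial}, since the left-hand ambiguity bound is eventually dominated by some function in the right-hand class. So only the left-to-right directions need work.

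\emph{Item 1.} Take $f(n)=\log\log(n)$ (with the paper's $\max(2,\cdot)$ convention, adjusted so that $f\geq 1$) and apply Theorem~\ref{t:constant-addition} directly. \emph{Item 2.} Take $f(n)=\log(n)$. For any $c\in\naturalnumberpositive$ choose $k\geq c$; then $\log(n^k)=k\log(n)\geq c\log(n)$. Theorem~\ref{t:positive-strengthening} gives $\p=\upleq{k\log(n)+\bigoh(1)}$, and Fact~\ref{f:trivial} then yields $\p=\upleq{c\log(n)+d}$ for every $d$. Since every function in $\bigoh(\log(n))$ is eventually at most $c\log(n)+d$ for suitable constants, this gives $\p=\upbigohlogn$. \emph{Item 3.} Take $f(n)=n^k+k$. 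For any $m$, the bound $n^m+m$ is eventually at most $f(n^m)=n^{km}+k$, so the result follows from Theorem~\ref{t:poly-stretch} and Fact~\ref{f:trivial}, and $\fewp$ is the union of these classes.

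\emph{Item 4.} Take $f(n)=j+n^{\log(n)}$. Then $f(n^k)\geq n^{k^2\log(n)}$. Any function in $n^{\bigoh(\log(n))}$ is eventually at most $n^{c\log(n)}$ for some $c$ (the constant multiplier can be absorbed into the exponent for large $n$). Choosing $k$ with $k^2\geq c+1$ and applying Theorem~\ref{t:poly-stretch} with Fact~\ref{f:trivial} finishes this item.

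\emph{Item 5.} Take $f(n)=2^{n^k}$. Any $\np$ machine has at most $2^{n^c+c}$ paths, and this is eventually at most $2^{n^{km}}=f(n^m)$ once $m$ is large enough. Hence $\p=\upleq{f(n^m)}$ gives $\p=\np$.

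The only delicate step is the bookkeeping with ``for all but finitely many $n$'' and the $\log$ convention at small $n$. Fact~\ref{f:trivial} handles finitely many exceptions, so the main obstacle reduces to checking in items 2 and 4 that big-Oh constants can be absorbed into the polynomial stretch of the argument. Items 2 and 4 do not need the $+\bigoh(1)$ term of Theorem~\ref{t:positive-strengthening} beyond constant offsets, because those offsets are themselves dominated.
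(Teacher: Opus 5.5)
Your route is the paper's: it presents the corollary simply as instances of Theorem~\ref{t:positive-strengthening} (padding via Theorem~\ref{t:poly-stretch}, constant addition via Theorem~\ref{t:constant-addition}) followed by Fact~\ref{f:trivial}. Your items 1, 2, 4, and 5 and all the right-to-left directions go through as written.

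There is one false step, in item 3. You assert that $n^m+m$ is eventually at most $f(n^m)=n^{km}+k$. For $k=1$ and any $m\geq 2$ this says $n^m+m\leq n^m+1$, which never holds. So as written you have not shown $\p=\upleq{n+1}\implies\p=\fewp$.

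The repair is easy, and there are two options:
\begin{enumerate}
\item Stretch by one more degree: $f(n^{m+1})=n^{k(m+1)}+k\geq n^{m+1}+1> n^m+m$ for $n\geq 2$.
\item Use the $+\bigoh(1)$ slack that Theorem~\ref{t:positive-strengthening} already gives you. This yields $\p=\upleq{n^{km}+k+m}$, and $n^{km}+k+m$ dominates $n^m+m$ everywhere.
\end{enumerate}

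You also leave the footnote to item 1 untreated. It is handled the same way. Take $f(n)=\max(\log\log(n)-j,1)$, which is monotone and at least $1$, and use stretch exponent $2^j$. Since $\log\log(n^{2^j})=\log\log(n)+j$ for $n\geq 4$, you get $f(n^{2^j})=\log\log(n)$ eventually. Theorem~\ref{t:positive-strengthening} with Fact~\ref{f:trivial} then yields $\p=\upleq{\bigoh(1)+\log\log(n)}$.
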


We mention that, as one can easily see from its proof,
Theorem~\ref{t:padding} relativizes. 
It immediately 
follows that Theorem~\ref{t:poly-stretch} and 
Corollary~\ref{c:concrete-examples} relativize.
Additionally, we can also observe that the proof that Theorem~\ref{t:constant-addition} relativizes, 
so it is clear that the proof of Theorem~\ref{t:positive-strengthening} also relativizes.

Theorem~\ref{t:poly-stretch} shows that polynomially-bounded
increases in the argument of the ambiguity limit yield linked-fate
pairs of classes, and the corollaries just given provide a number of
concrete examples. In a similar fashion, Theorem~\ref{t:constant-addition} shows that an addition of a constant amount of ambiguity also yields linked-fate pairs of classes.
And so, just as Theorem~\ref{t:oracle-super-constant}
shows that Theorem~\ref{t:osamu} could not be stretched any further by
relativizable techniques, 
so too can we 
show that
Theorem~\ref{t:positive-strengthening} would have relativized counterexamples if
one were to change the increase in the argument to more than a
polynomial, or if one were to add an unbounded function to the ambiguity bound.

\begin{theorem}\label{t:stretch-oracle}
    Let $\ntor{f_1}$ and 
	$\ntor{f_2}$ 
	be any pair of 
    monotonically nondecreasing
    functions such that
    (a)~$f_2(n) = 2^{n^{\bigoh(1)}}$,
    (b)~for each $n \in \naturalnumber$, $f_1(n) \leq f_2(n)$,%
    \footnote{\label{f:level-and-surge}
    With
 some 
	special-case work on a finite prefix, we could prove the theorem 
	even under the slightly weaker hypothesis ``for all but at most a 
	finite number of $n \in \naturalnumber$, 
	$f_1(n) \leq f_2(n)$.''  But that is not a 
	particularly interesting ``strengthening,'' and so we will not do 
	it.} and 
    (c)~for no $j\in\naturalnumber$ and no $k \in \naturalnumberpositive$ does it hold that, for all but a finite number of $n \in \naturalnumber$, $f_2(n) \leq f_1(n^k)+j$.
	Then there exists an 
    oracle $B$ such that $\p^B =
	\upleq{f_1(n)}^B \subsetneq \upleq{f_2(n)}^B$.
    Moreover, if the functions $\floor{f_1(n)}$ and $\floor{f_2(n)}$ are both computable, then $B$ is recursive.
\end{theorem}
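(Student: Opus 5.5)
We build $B$ by a stage construction that interleaves two kinds of requirements. The first kind collapses $\upleq{f_1(n)}^B$ to $\p^B$ by encoding. The second kind diagonalizes a test language $L_B \in \upleq{f_2(n)}^B$ out of $\p^B$. We reserve two disjoint regions of strings, say those beginning with $0$ and with $1$.

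For the collapse, we use the familiar encoding trick. Whenever a machine $N_i^B$ is $f_1$-categorical on an input $x$, we place a coding string $1\langle i,x,\text{path}\rangle$ that records its accepting paths. The set $\{(i,x) \mid N_i^B \text{ accepts } x\}$ restricted to $f_1$-categorical behavior is then decidable in $\p^B$ by querying the code. Accepting paths are only polynomially long, so the code can be written at polynomially larger length, and a $\p^B$ machine can read it.

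For the separation, we use the diagonal language
\[
L_B=\{0^n \mid (\exists w\in\Sigma^{m(n)})[0w\in B]\},
\]
defined for suitable "tower" lengths $n=e(\ell)$. At each such length we ensure at most $\floor{f_2(n)}$ witnesses, so $L_B\in\upleq{f_2(n)}^B$. This uses $f_2 = 2^{n^{\bigoh(1)}}$, which lets witnesses of polynomial length $m(n)$ fit. We kill $M_i$ at stage $i$ in the usual way: run $M_i^B(0^n)$ with no witnesses, and if it rejects, add witnesses outside its query set.

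The main obstacle is the interaction between the two parts. Adding witnesses in the separation part changes $N_j^B$'s behavior on inputs whose codes the collapse part must have already fixed. Condition (c) is designed for exactly this. It gives infinitely many $n$ with $f_2(n) > f_1(n^k)+j$ for the relevant $k$ and $j$, so we choose stage lengths $n$ among these, spread out by the tower function $e$.

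The key lemma to prove is about a single $N_j$ running in time $n^{j}+j$ on inputs $y$ of length at most $n^{k}$ that query the witness region at length $m(n)$. Because $N_j$ is $f_1$-categorical, its accepting paths can "see" at most about $f_1(|y|)\le f_1(n^k)$ distinct witnesses in an essential way. We would prove this by a path-poisoning/counting argument in the style of Theorem~\ref{t:constant-addition}: if $N_j$'s acceptance depended on more than $f_1(n^k)+j$ of the $f_2(n)$ witness choices, then toggling witnesses would produce more than $f_1(|y|)$ accepting paths. Hence we can choose the $\floor{f_2(n)}$ witnesses, or none, so that every categorical $N_j$ with $j\le i$ either keeps its acceptance status or witnesses non-categoricity. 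In the latter case we "freeze" it as non-categorical, which removes its requirement. Those finitely many $N_j$ and the bounded constant $j$ are why the statement uses $f_1(n^k)+j$. We expect making this counting rigorous, including how the codes are placed so they do not leak witness information, to be the hardest step. The first thing we would try is the Beigel–Gill–Hertrampf style "at most polynomially many queries per path" argument combined with the fact that $f_2(n)/f_1(n^k)$ is unbounded along the chosen lengths.

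Recursiveness follows because every step is a finite search once $\floor{f_1}$ and $\floor{f_2}$ are computable.
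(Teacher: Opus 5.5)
\textbf{The gap.} The step you defer is essentially the whole theorem, and your collapse mechanism makes it harder.

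Read literally (a code for every $(i,x)$ on which $N_i^B$ accepts with at most $f_1(|x|)$ paths), the mechanism defeats itself. Fix $c$ with $f_1(n)\le f_2(n)\le 2^{n^c}$ for large $n$. For any NPTM $N$, let $N^*$ accept on one designated path and, on its other branch, simulate $N$, splitting each accepting path into $2^{n^c}$ accepting paths. Then for large $|x|$, $N^*$ accepts $x$ with at most $f_1(|x|)$ paths exactly when $x\notin L(N)$. Your codes would therefore put $\conp^B$ into $\p^B$, giving $\upleq{f_2(n)}^B\subseteq\np^B=\p^B$. (Separately, a $\p^B$ machine cannot locate $1\langle i,x,\text{path}\rangle$ without knowing the path.)

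So you must code each $N_j$ only until it is exposed as non-categorical. Everything then rests on your key lemma: at a diagonalization length, one admissible witness set must preserve $M_i$'s run and, for every code address $M_i$ reads, either preserve that code or expose its machine. This is not proved, and it is misformulated in three ways:
\begin{itemize}
\item $M_i$ on $0^n$ reads codes for up to $p_i(n)$ pairs $(j',x')$ with $p_{j'}(|x'|)<p_i(n)$, not just $N_j$ with $j\le i$.
\item The coded computations themselves read lower codes that depend on the witnesses, so what you must control is not an NP computation over the witness region.
\item A deterministic-readable code has one address, so freezing what $M_i$ queried falsifies the coding.
\end{itemize}
Nor does ``a categorical machine sees few witnesses'' imply that its status is preserved. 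Take a machine accepting along path $y$ iff $y$ is a witness and none of the next $n$ strings is. It has exactly one accepting path for any contiguous block of witnesses, so its status flips; only a different placement of the witnesses exposes it.

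Theorem~\ref{t:constant-addition} cannot supply this lemma. It is a conditional collapse (it uses $\p=\upleq{f(n)+j}$ to put auxiliary sets into $\p$), not a combinatorial bound on what categorical machines can see. Likewise, the $+j$ and the $n^k$ in~(c) are there because of Theorems~\ref{t:constant-addition} and~\ref{t:poly-stretch}, not because of finitely many machines $N_j$.

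\textbf{How the paper proceeds instead.} The paper never codes into $\p^B$ directly.
\begin{itemize}
\item A $\pspace$-complete component $H$, the Hartmanis--Hemachandra lemma (Corollary~\ref{c:p-np-conp}), and the $T_{i,j}$ requirements are used to obtain $\p^B=\np^B\cap\conp^B$. Then only $\upleq{f_1(n)}^B\subseteq\conp^B$ is needed.
\item That inclusion comes from odd stages. For each $N_j$ whose $S_j$ is unmet, the complement of $L(N_j^B)$ is written into $C$ under exponentially many redundant addresses, read nondeterministically by guessing a suffix.
\item These codes ignore path counts. The even stages freeze fewer strings than any code prefix has extensions, so diagonalization never destroys a code.
\item The combinatorial burden moves to the generic-style $S_i$/$T_{i,j}$ requirements. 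The footnote in Section~\ref{subsub:satisfy-t} notes that a direct approach would suffice only for polynomially bounded $f_1$.
\end{itemize}

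\textbf{Where your plan is more careful than the paper.} Choosing witness lengths where~(c) fails is necessary, not a convenience. Suppose that, for a fixed $k$, almost all diagonalization lengths $n$ had at most $f_1(n^k)$ witnesses. Then $\{0^{n^k}\mid 0^n\in L_B\}$ would lie in $\upleq{f_1(n)}^B$, forcing $L_B\in\p^B$. The paper fixes its witness lengths to be tower lengths and uses~(c) only through Lemma~\ref{l:inf-oft-exp}. But~(c) allows $f_2(e(r))=f_1(e(r)^2)$ for every $r$. For such pairs the padded version of $L_E$ is $f_1$-categorical relative to the paper's oracle, so its construction as written does not handle them.
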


We defer the proof of Theorem~\ref{t:stretch-oracle} to the next section, but before presenting it, we justify the conditions in that theorem's statement.

Conditions~(a) and~(c) ward off the case where the two functions are
so large that the theorem is made inherently impossible to prove due
to having, for all $A$, $\upleq{f_1(n)}^A = \upleq{f_2(n)}^A = \np^A$---via our Theorem~\ref{t:positive-strengthening}.
Indeed, both conditions imply that
$f_1(n) \neq 2^{n^{\Omega(1)}}$,
and so $\upleq{f_1(n)}$ would not have enough ambiguity to, in any
obvious way, equal $\np$.  (We are not saying that that class isn't
equal to $\np$ in some worlds.  We are just explaining why the
claim/worry ``just use two huge functions $f_1$ and $f_2$ since then
both classes will robustly be $\np$ and so they are equal and can't be
oracle-separated'' isn't a convincing claim.)

We also do not have to worry about the case where $f_1$ and $f_2$ are both constant-bounded, which by Theorem~\ref{t:osamu} would imply that $\p = \upleq{f_1(n)} = \upleq{f_2(n)}$ not just in the real world, but also in every relativized world. Moreover, if $f_2$ were bounded by a constant $j \in \naturalnumber$, then so would $f_1$ by condition~(b), which would imply that for all $n \in \naturalnumber$, $f_2(n) \leq f_1(n) + j$, thereby violating condition~(c).

\section{Proof of Theorem~\ref{t:stretch-oracle}}\label{s:big-proof}

    Let $f_1$ and $f_2$ be functions satisfying the conditions in the statement of Theorem~\ref{t:stretch-oracle}.
    Let $H$ be a $\pspace$-complete set whose alphabet is $\Sigma=\{0,1\}$. The disjoint union operator is defined by $A \oplus B = \{0x \mid x \in A\} \cup \{1y \mid y \in B\}$. For concreteness, we will take $\oplus$ to be right associative.
    In this proof we construct an oracle $B = H \oplus C \oplus E$. Moreover, $H$, $C$, and $E$ will each be over the alphabet $\Sigma$. 

    We let $\sigmastar$ denote the set of all strings over $\Sigma$. For each $n \in \naturalnumber$, let $\Sigma^{=n} = \{x \in \sigmastar \mid |x| = n\}$ and let $\Sigma^{\leq n} = \{x \in \sigmastar \mid |x| \leq n\}$.

    Because there are many ideas being used in this proof, in Section~\ref{s:groundwork} we first give 
    a brief 
    overview 
    of 
    our proof.
    We then introduce
    some useful lemmas.

    \subsection{%
    Groundwork}\label{s:groundwork}

    The $M_i$'s and $N_i$'s of Section~\ref{s:big-proof} always refer to the machines 
    of the standard enumerations
    mentioned from Section~\ref{s:defs}.
    $E$ will be used to separate $\p^B$ and $\upleq{f_2(n)}^B$ by guaranteeing that the language
    $L_E = \{0^{n} \mid n\geq 0 \land (\exists y \in E)[|y| = n]\}$ is in $\upleq{f_2(n)}^B - \p^B.$
    To do so, we will 
    construct $L_E$ by diagonalizing against every 
    $M_i$, $i\geq 1$,
    (recall that those machines have the property that for each $i$ and each $A$, the running time of $M^A_i$ on inputs of length $n$ is at most $n^i+i$) with oracle $B$.
    This will be done while ensuring that we never add to $E$ more than $f_2(n)$ strings at each length $n$.
    Doing so will guarantee that $L_E \in \upleq{f_2(n)}^B$.
    Moreover, we will allow $E$ to contain only strings with 
    tower lengths to help us ensure that later stages in our 
    construction do not affect earlier stages.
    
    We will 
    try to make $N_i$'s having oracle $B$ not be $f_1$-categorical,
    and we will also try to make pairs of $(N_i, N_j)$'s having oracle $B$ not accept complementary languages,
    which will help us show that $\p^B = \np^B\cap\conp^B$---using $H$ and a result of Hartmanis and Hemachandra~\cite{har-hem:j:rob}, namely Lemma~\ref{l:har-hem} below. %

    \begin{lemma}[\cite{har-hem:j:rob}]\label{l:har-hem}
        Assume $\p=\np$. Let $N_i$ and $N_j$ robustly accept complementary languages. Then for every oracle $O$, $L(N_i^O) \in \p^O$.
    \end{lemma}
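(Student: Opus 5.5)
The plan is to follow the Hartmanis--Hemachandra argument, fixing an arbitrary oracle $O$ and showing that $L(N_i^O)$ is decided in polynomial time by a machine that makes queries to $O$. Because $N_i$ and $N_j$ robustly accept complementary languages, for every input $x$ exactly one of $N_i^O$, $N_j^O$ accepts $x$. In particular, some accepting path of one of the two machines exists. Such a path fixes a finite set of oracle answers: the queries it makes and the answers it receives.

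The key observation concerns any machine that has an accepting path under some partial oracle assignment $\rho$. Suppose $N_i$ accepts $x$ along a path whose queries are answered according to $\rho$, where $\rho$ is consistent with $O$ on those queries. Then $N_j$ cannot accept $x$ relative to any oracle extending $\rho$. Indeed, if it could, we would take an oracle agreeing with both the $N_i$ path and the $N_j$ path. Combining the two is possible provided the two sets of query answers are mutually consistent, and this contradicts robust complementarity. The general principle I will use is therefore the following: any accepting path $p$ of $N_i$ and any accepting path $q$ of $N_j$, each viewed with its own query answers, must \emph{conflict}. That is, some string is queried on both paths with different answers. Otherwise one could build an oracle extending both, under which both machines accept $x$.

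The deterministic algorithm works as a search that queries the true oracle $O$. I maintain a set $Q$ of strings whose answers in $O$ are already known, starting with $Q=\emptyset$. In each round, I use the assumption $\p=\np$ (which gives $\p=\pspace$-style search power only over NP predicates, and suffices here) to find, with respect to the current known assignment $O|_Q$, an accepting path $p$ of $N_i$ on $x$ that is consistent with $O|_Q$. The query answers on this path are guessed, not taken from $O$. Finding such a path is an NP search problem on input $(x, O|_Q)$, which has polynomial size, so under $\p=\np$ it can be done in polynomial time by prefix search. If no such path exists, then since the true $O$ extends $O|_Q$, $N_i^O$ has no accepting path, and the algorithm rejects. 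Otherwise, I query $O$ on all the strings queried along $p$. If all the answers agree with $p$'s guesses, then $p$ is a genuine accepting path of $N_i^O$, and the algorithm accepts. If some answer disagrees, I add these strings to $Q$ and repeat.

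The main obstacle is bounding the number of rounds by a polynomial. I plan to use the conflict principle for this. Let $q$ be the genuine accepting path of whichever machine accepts $x$ relative to $O$. If $N_j^O$ accepts, then $q$ is consistent with $O$, and so with every $O|_Q$. Each candidate path $p$ found for $N_i$ must conflict with $q$ on some string queried by both $p$ and $q$. Querying $p$'s strings therefore places at least one new string of $q$'s query set into $Q$. This string is new because $p$ was consistent with $O|_Q$, while $q$ agrees with $O$, so the conflicting string cannot already lie in $Q$. Since $q$ makes at most $|x|^j+j$ queries, there are at most polynomially many rounds before no consistent $N_i$ path remains, at which point the algorithm correctly rejects. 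If instead $N_i^O$ accepts, the genuine path is consistent throughout, so the search never reports that no path exists and must eventually find a fully correct path. The algorithm is then correct. To make the round bound symmetric, I will alternate searches for $N_i$ and $N_j$ paths; the $N_j$ searches can be done in the same way. Each round runs in polynomial time, so $L(N_i^O)\in\p^O$.
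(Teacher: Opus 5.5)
Your proposal is correct and is essentially the standard Hartmanis--Hemachandra argument that the paper invokes only by citation, without giving a proof of its own: any accepting path of $N_i$ and any accepting path of $N_j$ must conflict on some commonly queried string, so each refuted candidate path of one machine resolves a previously unresolved query of the genuine accepting path of the other. One point to tighten is that alternating the $N_i$- and $N_j$-searches is essential, not merely for symmetry, because with $N_i$-searches alone the case $x \in L(N_i^O)$ is only shown to terminate, not to terminate within polynomially many rounds; it is the $N_j$-searches (accept when no consistent $N_j$ path remains, reject on a verified one) that bound that case by $|x|^i+i+1$ rounds.
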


    \begin{corollary}\label{c:p-np-conp}
        Let $A$ be an oracle such that $\p^A = \np^A$, and let $N_i$ and $N_j$ robustly accept complementary languages. Then for every oracle $O$, $L(N_i^{A \oplus O}) \in \p^{A \oplus O}$.
    \end{corollary}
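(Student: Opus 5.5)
The plan is to reduce Corollary~\ref{c:p-np-conp} to Lemma~\ref{l:har-hem} by relativizing that lemma to the base oracle $A$. The natural move is to replace the machines $N_i, N_j$ by machines that have $A$ hardwired into their oracle access. Concretely, for each nondeterministic oracle machine $N$, let $\widehat{N}$ be the machine that runs $N$ but redirects every query $q$ to $1q$. It also has an extra "hard-wired" access to $A$ through queries of the form $0q$. Then for every oracle $O$ we have $L(\widehat{N}^{\,A\oplus O}) = L(N^{O})$-style identities. Since the corollary is stated directly about $N_i^{A\oplus O}$, though, it is cleaner to argue as follows.

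The main step is to observe that the proof of Lemma~\ref{l:har-hem} relativizes. Its hypothesis $\p=\np$ is used only to run a polynomial-time search over a polynomial-space (indeed $\np$/$\conp$-definable) question about the computation trees of $N_i$ and $N_j$. That question asks whether there is a "small" oracle modification, consistent with the current answers, making both machines accept, which contradicts robust complementarity. Equivalently, it builds the accepting path for the correct machine bit by bit, using $\np$-queries whose oracle part is given as an explicit finite table. I would restate that argument with every machine also given query access to a fixed base oracle $A$. The hypothesis becomes $\p^A=\np^A$, and the conclusion becomes: for every $O$, $L(N_i^{A\oplus O})\in \p^{A\oplus O}$.

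Concretely, on input $x$ the $\p^{A\oplus O}$ algorithm maintains a partial path of $N_i$ or $N_j$. It queries the actual oracle $A\oplus O$ whenever the simulated path makes a query. To decide how to extend the path, it asks an $\np^{A}$ question: "is there an extension of this partial path, together with some finite set of answers to not-yet-asked $1$-prefixed queries (the $0$-prefixed queries being answered by $A$ itself), leading to acceptance?" This question depends only on $x$, the partial path, and the finitely many already-fixed answers, all encoded in the input. So it lies in $\np^A=\p^A$, and hence is decidable in $\p^{A\oplus O}$ via the $0$-prefixed queries.

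Robust complementarity of $N_i,N_j$ implies, in particular, complementarity relative to every oracle of the form $A\oplus O'$. The Hartmanis--Hemachandra consistency argument then shows the following: if both a certificate for $N_i$ and one for $N_j$ existed with compatible $O'$-answers, we could build a single oracle $A\oplus O'$ on which both accept. That is impossible, so the greedy search correctly determines acceptance of $N_i^{A\oplus O}$ in polynomial time.

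The main obstacle I expect is carefully checking that the original proof's interplay between "answers already fixed by the true oracle" and "freely chosen answers" survives when part of the oracle (the $A$ side) is not free. This is why the $\np^A$ questions must fix the $0$-prefixed answers to agree with $A$ rather than guessing them. Once that bookkeeping is done, the rest is a direct relativization of Lemma~\ref{l:har-hem}.
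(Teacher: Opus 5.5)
Your proposal follows the paper's proof: both get the corollary by relativizing the Hartmanis--Hemachandra argument behind Lemma~\ref{l:har-hem} to the base oracle $A$. In that relativization $\p^A=\np^A$ replaces $\p=\np$, the certificate-search questions become $\np^A=\p^A$ questions answered through the $0$-side of $A\oplus O$, and robust complementarity supplies complementarity relative to every $A\oplus O'$ (a point you make more explicit than the paper does).

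One caution on your concrete sketch. The $\np^A$ questions should ask only for one machine's accepting certificate consistent with the answers learned so far; the impossibility of compatible $N_i$- and $N_j$-certificates is used in the analysis, not queried. Also, the polynomial bound does not come from a greedy step-by-step descent, which can dead-end and force exponentially many restarts. It comes from fully querying one candidate certificate per round, so that every still-consistent certificate of the other machine gets a new answer fixed and at most $q(|x|)+1$ rounds are needed, where $q$ bounds the number of queries on a path.
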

    \begin{proof}
        Since the proof of Lemma~\ref{l:har-hem} given in \cite{har-hem:j:rob} relativizes and the assumption of the lemma holds relative to $A$ (i.e., $\p^A=\np^A$), it follows (unconditionally) that for 
        every $N_i$ and $N_j$
        that robustly accept complementary languages, and for every oracle $O$, $L(N_i^{A \oplus O}) \in \p^{A \oplus O}$.
    \end{proof}
    Finally, 
    for each $L \in \coup_{\leq f_1(n)}^B$,
    we will
    add strings in $C$ that will be used to determine membership in $L$ in nondeterministic polynomial time relative to oracle $B$\@. 
    This will allow us to establish that
    $\coup_{\leq f_1(n)}^B \subseteq \np^B$. By definition, $\coup_{\leq f_1(n)}^B \subseteq \conp^B$, so it will immediately follow that $\p^B = \upleq{f_1(n)}^B = \np^B \cap \conp^B$.

    We will use the next lemma. %

    \begin{lemma}\label{l:inf-oft-exp}
        For each $\epsilon > 0$, $f_1(n) < 2^{\epsilon n}$ for infinitely many values of $n$.
    \end{lemma}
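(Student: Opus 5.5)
We argue by contradiction, using condition~(c) together with the monotonicity of $f_1$ and $f_2$, and condition~(a) on $f_2$. Fix $\epsilon > 0$ and suppose that $f_1(n) < 2^{\epsilon n}$ holds for only finitely many $n$. Then there is an $n_0$ with $f_1(n) \geq 2^{\epsilon n}$ for all $n \geq n_0$. By condition~(a), there are constants $c, d \in \naturalnumberpositive$ with $f_2(n) \leq 2^{c n^d}$ for all sufficiently large $n$ (we may take the big-Oh in the exponent to give a bound of the form $c n^d$, absorbing constants).

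Next we pick $k \in \naturalnumberpositive$ large enough that $\epsilon n^k \geq c n^d$ for all sufficiently large $n$. Taking $k = d+1$ works, since then $\epsilon n^{d+1} \geq c n^d$ as soon as $n \geq c/\epsilon$. For all sufficiently large $n$ we have $n^k \geq n \geq n_0$, so
\[
f_1(n^k) \geq 2^{\epsilon n^k} \geq 2^{c n^d} \geq f_2(n).
\]
Hence $f_2(n) \leq f_1(n^k) + 0$ for all but finitely many $n$, which contradicts condition~(c) with $j = 0$ and this $k$. So $f_1(n) < 2^{\epsilon n}$ for infinitely many $n$.

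This uses neither condition~(b) nor monotonicity. The only step that needs care is making the form of condition~(a) precise. Our reading is that $f_2(n) \leq 2^{p(n)}$ eventually for some polynomial $p$, and any polynomial is eventually bounded by $c n^d$, so the choice of $k$ is robust. We expect no real obstacle beyond this bookkeeping.
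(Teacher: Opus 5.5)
Your proof is correct and is essentially the paper's argument: assume $f_1(n)\geq 2^{\epsilon n}$ eventually, bound $f_2$ via condition~(a), and take $k=c+1$ (your $d+1$) with $j=0$ to contradict condition~(c). Your remark that monotonicity is not needed is also accurate; the paper mentions it, but the chain $f_1(n^{c+1}) \geq 2^{\epsilon n^{c+1}} \geq 2^{n^c} \geq f_2(n)$ only uses the eventual lower bound at the point $n^{c+1}$ and the eventual upper bound on $f_2$ at $n$.
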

    \begin{proof}
    
    Let $\epsilon > 0$ and suppose for the sake of contradiction that $f_1(n) \geq 2^{\epsilon n}$ for all but finitely many values of $n$. In other words, there is a positive constant $\hat{n}_0$ such that for all $n \geq \hat{n}_0$, $f_1(n) \geq 2^{\epsilon n}$. Let $\hat{n}_0$ be such a constant.
    
    Moreover, recall from condition~(a) of Theorem~\ref{t:stretch-oracle} that $f_2(n) = 2^{n^{\bigoh(1)}}$.
    Therefore, there are positive integers $c$ and $n_0$ such that for all $n \geq n_0$, we have $f_2(n) \leq 2^{n^c}$. Let $c$ and $n_0$ be such positive integers.
    Additionally, let $n_0' = \max\{\ceiling{1/\epsilon}, n_0, \hat{n}_0\}$. 
    
    Notice that for all $n \geq n_0'$, it holds that $\epsilon n \geq 1$. So it follows using the assumption in the first sentence of this proof and the inequality in the prior sentence---along with the fact that the functions below are monotonically nondecreasing---that for all $n \geq n_0'$, %
        $f_1(n^{c+1}) \geq 2^{\epsilon n^{c+1}} = 2^{\epsilon n\cdot n^{c}} \geq 2^{n^c} \geq f_2(n),$
    which violates condition~(c) of Theorem~\ref{t:stretch-oracle}.
    \end{proof}

    \subsection{%
    Oracle Construction}

    We define our oracle using a 
    stage
    construction:
    Our oracle is the outcome of an infinite sequence of stages $0, 1, 2, \ldots$, each of which may add some strings to the $C$ or $E$ parts of the oracle.
    The proof of correctness for the oracle is deferred to the appendix. 

    For each $n \in \naturalnumber$, by the end of stage $n$ we will have defined $C(n)$ (resp. $E(n)$), which will be the contents of $C$ (resp. $E$) at the end of stage $n$.
    Strings added to the oracle during a particular stage will never be removed in a later stage, so it follows that for each stage $n$, $C(n) \subseteq C(n+1)$ and $E(n) \subseteq E(n+1)$. So, clearly, $C = \bigcup_{n \in \naturalnumber} C(n)$ and $E = \bigcup_{n \in \naturalnumber} E(n)$ (and recall that $B = H \oplus C \oplus E$).
    Moreover,
    for each stage $n$, we let $B(n) = H \oplus C(n) \oplus E(n)$, for brevity.
    During the proof, we will declare certain strings to be ``frozen'' out of $C$. That means they are not yet at that part of the construction members of $C$, and we commit to not add them to $C$ in the rest of the construction. They are thus guaranteed to be in $\overline{C}$. We use the notion of freezing strings out analogously for $E$. 
    In  particular, since we build $C$ and $E$ in stages, let $C'(n)$ (resp. $E'(n)$) denote the strings we by the end of stage $n$ declared to be frozen out of $C$ (resp. $E$).

    Let $p_i(n) = n^i + i$. We define below 
    requirements that we will attempt to make our oracle satisfy. However, not all the requirements will be necessarily 
    satisfied. For each $i\in \naturalnumberpositive$,
    \begin{enumerate}
        \item let $R_i$ denote the requirement that ``$L(M_i^B) \neq L_E$,''
        \item let $S_i$ denote the requirement that ``$(\exists x)[\acc_{N_i^B}(x) \geq \floor{f_1(|x|)} + 1]$,'' and
        \item for each $j$, $1\leq j < i$, let $T_{i,j}$ denote the requirement that ``$L(N_i^B) \neq \overline{L(N_j^B)}$.''
    \end{enumerate}

    For each $n \in \naturalnumberpositive$, we will call stage $n$ an ``odd stage'' if $n$ is odd, and we will call stage $n$ an ``even stage'' if $n$ is even.
    While we do not explicitly give requirements for 
    adding to $C$ those strings that will be used to determine membership in $\coup_{\leq f_1(n)}^B$ languages,
    but we will make it clear (as part of the proof of Claim~\ref{claim:coupf1-in-np}) that that task is achieved by the ``odd'' stages of the construction.
    Moreover, we will only add strings to $C$ during odd stages and we will only add strings to $E$ during even stages.

    Whenever we speak of a string $x$ being queried to $X \in \{H, C, E\}$, we assume that the prefix of the query string that determines which part of the oracle is being queried is not included in $x$.

    We now give for each of the requirements defined above the conditions that must hold for the requirement to be satisfied at a given stage $n$ and how to satisfy that requirement in that same stage $n$. We give these descriptions in Sections~\ref{subsub:satisfy-r}--\ref{subsub:satisfy-t} and then give the stage construction in Section~\ref{subsub:stage-construction}.

    \subsubsection
    {Satisfying \boldmath$R_i$ Requirements}\label{subsub:satisfy-r}
    By making our oracle $B$ satisfy each of the $R_i$ requirements, we will have that $\p^B \neq \upleq{f_2(n)}^B$, as we will guarantee that $L_E \in \upleq{f_2(n)}^B$ and it holds that $\p^B = \{L(M_i^B)\}_{i\in\naturalnumberpositive}$.

    \begin{algorithm}[h]
    \caption{Algorithm to satisfy the $R_i$ requirements.}\label{alg:satisfy-r}
    \begin{algorithmic}[1]
        \Require $n$ is a tower length.
        \Require $C(n-1)$, $E(n-1)$, $C'(n-1)$, and $E'(n-1)$ are subsets of $\Sigma^{\leq n-1}$.
        \Require $p_i(n) + \floor{f_1(n)} < 2^{n/2}$.
        \Procedure{Satisfy-$R$}{$i$, $n$}
        \State Simulate $M_i^{B(n-1)}$ on input $0^n$.
        \State Let $Q_E$ denote the strings queried to $E$ that received answer ``no'' and that are not currently frozen out of $E$.
        \State Let $Q_C$ denote the strings queried to $C$ that received answer ``no'' and that are not currently frozen out of $C$.
        \If{$M_i^{B(n-1)}$ rejects on input $0^n$}
            \State Let $N_E$ denote the strings of length $n$ that are currently \emph{not} frozen out of $E$.
            \State\label{alg-r:add-strings} 
            Let $Y$ be the $\min\{\|N_E\|, \floor{f_2(n)}\}$ lexicographically smallest strings in $N_E$.
            \State Let $E(n) = E(n-1) \cup Y$.
        \EndIf
        \State Let $C(n) = C(n-1)$.
        \State Freeze out of $C$ in string in $Q_C$.
        \State Freeze out of $E$ in string in $Q_E$.
        \EndProcedure
    \end{algorithmic}
    \end{algorithm}

    The procedure \textsc{Satisfy-$R$} adds strings of length $n$ to $E$ and we only add strings with tower lengths to $E$, thus explaining the first precondition of the procedure. Moreover, the second precondition of the procedure ensures that steps taken in prior stages do not affect the steps that will be taken during the execution of the procedure. Finally, the third precondition of the procedure ensures that $M_i^{B(n-1)}$ on input $0^{n}$ cannot 
    query every string of length $n$
    and that the number of strings of length $n$ in $C'(n)-C'(n-1)$, which are determined during the execution of \textsc{Satisfy}-$R$, is less than $2^{n/2}$ (this will become relevant later, in the proof of Claim~\ref{claim:coupf1-in-np}). Moreover, this same precondition of the procedure guarantees that on line~\ref{alg-r:add-strings}, at least $\floor{f_1(n)} + 1$ strings will be added to $E$. 
    
    To see why the procedure works, notice that
    if $M_i^{B(n-1)}$ accepts $0^n$, then $0^n \not\in L_E$, and if $M_i^{B(n-1)}$ rejects $0^n$, then $0^n \in L_E$. 
    Moreover, in our construction, once a string is added to $B$, it is never removed from $B$, and as we mentioned earlier, once a string is frozen out of $E$ during some stage, it is guaranteed to be in $\overline{E}$. Therefore, 
    $0^n \in L(M_i^{B(n-1)}) \iff 0^n \in L(M_i^B)$. 
    In sum, if $R_i$ is satisfied, then
    $L_E \neq L(M_i^B)$.

    \subsubsection
    {Satisfying \boldmath$S_i$ Requirements}\label{subsub:satisfy-s}
    
    Our $S_i$ requirements seek to make each $N_i^B$ have at least $f_1(n)+1$ accepting paths on some input of length $n$, thus guaranteeing that $N_i^B$ is not $f_1$-categorical. However, the construction cannot guarantee our oracle $B$ satisfies each $S_i$ requirement. This is however not an issue as doing so will allow us to draw a connection, as part of the proof of Claim~\ref{claim:robust-equality}, between unsatisfied $S_i$ requirements and robustly $f_1$-categorical oracle machines in the enumeration $N_1$, $N_2$, $N_3$, $\ldots$ from Section~\ref{s:defs}. Doing so will be helpful in proving both Claims~\ref{claim:robust-equality} and~\ref{claim:coupf1-in-np}, which will ultimately underpin the proof that $\p^B = \upleq{f_1(n)}^B$.

        \begin{algorithm}[h]
    \caption{Algorithm to satisfy the $S_i$ requirements.}\label{alg:satisfy-s}
    \begin{algorithmic}[1]
        \Require $n$ is a tower length.
        \Require $C(n-1)$, $E(n-1)$, $C'(n-1)$, and $E'(n-1)$ are subsets of $\Sigma^{\leq n-1}$.
        \Require $p_i(n)(\floor{f_1(n)}+1) < 2^{n/2}$.
        \Require There is a set $U_E \subseteq \Sigma^{=n}$, where $\|U_E\| \leq f_2(n)$, and a string $x \in \Sigma^{\leq n}$, such that $N_i^{H \oplus C(n-1) \oplus (E(n-1) \cup U_E)}$ has at least $\floor{f_1(|x|)}+1$ accepting paths on input $x$.
        \Procedure{Satisfy-$S$}{$i$, $n$}
            \State Let $U_E$ and $x$ satisfy the last precondition of the procedure.
            \State Simulate $N_i^{H \oplus C(n-1) \oplus (E(n-1) \cup U_E)}$ on $x$.
            \State Let $Q_E$ denote the strings that were queried to $E(n-1)\cup U_E$ in the above simulation, along one of the $\floor{f_1(|x|)}+1$ lexicographically smallest accepting paths, that received answer ``no'' and that are not currently frozen out of $E$.
            \State Let $Q_C$ denote the strings that were queried to $C(n-1)$ in the above simulation, along one of the $\floor{f_1(|x|)}+1$ lexicographically smallest accepting paths, that received answer ``no'' and that are not currently frozen out of $C$.
            \State Let $E(n) = E(n-1) \cup U_E$.
            \State Let $C(n) = C(n-1)$.
            \State Freeze out of $C$ each string in $Q_C$. %
            \State Freeze out of $E$ each string in $Q_E$. %
        \EndProcedure
    \end{algorithmic}
    \end{algorithm}

    The justifications for the first two preconditions for \textsc{Satisfy}-$S$ are the same as the ones for \textsc{Satisfy}-$R$. The third precondition guarantees that $Q_E$ and $Q_C$ each contain fewer than $2^{n/2}$ strings, which will become relevant in the proof of Claim~\ref{claim:coupf1-in-np}. 
    If the fourth precondition is satisfied, then there is a way to make our oracle satisfy requirement $S_i$, by doing exactly what is done in \textsc{Satisfy}-$S$. 
    We also restrict $Q_E$ and $Q_C$ to be about queries done along the $\floor{f_1(|x|)}+1$ lexicographically smallest accepting paths---which are guaranteed to exist by the fourth precondition---in order to prove Claim~\ref{claim:coupf1-in-np}. 

    By using similar arguments as the ones used to justify the correctness of \textsc{Satisfy}-$R$, it follows that if $S_i$ is satisfied by our oracle $B$, then $N_i^B$ is not $f_1$-categorical.

    \subsubsection{Satisfying \boldmath$T_{i,j}$ Requirements}\label{subsub:satisfy-t}

    Our $T_{i,j}$ requirements seek to prevent pairs $(N_i^B, N_j^B)$ from accepting complementary languages whenever possible. 
    Just like with the $S_i$ requirements, the construction we give cannot guarantee that our oracle $B$ satisfies each $T_{i,j}$ requirement, but in the proof of Claim~\ref{claim:coupf1-in-np}, we will draw useful information from unsatisfied $T_{i,j}$ requirements that will allow us to prove that $\p^B = \np^B \cap \conp^B$, thereby helping us leverage the steps taken in the odd stages to prove that $\p^B = \upleq{f_1(n)}^B$.\footnote{If there were a polynomial $q(n)$ such that for all but a finitely many values of $n$, $f_1(n) \leq q(n)$, it would suffice to leverage only the $R_i$ and $S_i$ requirements to construct our desired oracle, e.g., using the approach used by Fenner, Fortnow, and Kurtz~\cite{fen-for-kur:jisomorphism-holds}. However, we make no such assumption on $f_1$, and thus use a two-step process to obtain our result that $\p^B = \upleq{f_1(n)}^B$.}

    \begin{algorithm}[h]
    \caption{Algorithm to satisfy the $T_{i,j}$ requirements.}\label{alg:satisfy-t}
    \begin{algorithmic}[1]
        \Require $n$ is a tower length.
        \Require $C(n-1)$, $E(n-1)$, $C'(n-1)$, and $E'(n-1)$ are subsets of $\Sigma^{\leq n-1}$.
        \Require $p_i(n) + p_j(n) < 2^{n/2}$.
        \Require There is a set $U_E \subseteq \Sigma^{=n}$, where $\|U_E\| \leq f_2(n)$, and a string $x \in \Sigma^{\leq n}$, such that $N_i^{H \oplus C(n-1) \oplus (E(n-1) \cup U_E)}$ and $N_j^{H \oplus C(n-1) \oplus (E(n-1) \cup U_E)}$ 
         both accept $x$.
        \Procedure{Satisfy-$T$}{$i$, $j$, $n$}
            \State Let $U_E$ and $x$ satisfy the last precondition of the procedure.
            \State Simulate $N_i^{H \oplus C(n-1) \oplus (E(n-1) \cup U_E)}$ on $x$.
            \State Simulate $N_j^{H \oplus C(n-1) \oplus (E(n-1) \cup U_E)}$ on $x$.
            \State Let $Q_E$ denote the strings that were queried to $E(n-1)\cup U_E$ in the above simulations, along the lexicographically smallest accepting path of each machine, that received answer ``no'' and that are not currently frozen out of $E$.
            \State Let $Q_C$ denote the strings that were queried to $C(n-1)$ in the above simulations, along the lexicographically smallest accepting path of each machine, that received answer ``no'' and that are not currently frozen out of $C$.
            \State Let $E(n) = E(n-1) \cup U_E$.
            \State Let $C(n) = C(n-1)$.
            \State Freeze out of $C$ each string in $Q_C$.
            \State Freeze out of $E$ each string in $Q_E$.
        \EndProcedure
    \end{algorithmic}
    \end{algorithm}

    The justifications for the first three preconditions for \textsc{Satisfy}-$T$ are the same as the ones for \textsc{Satisfy}-$S$.
    If the fourth precondition is satisfied, then there is a way to make our oracle satisfy requirement $T_{i,j}$, by doing exactly what is done in \textsc{Satisfy}-$T$. 
    We also restrict $Q_E$ and $Q_C$ to be about queries done along the lexicographically smallest accepting paths---which is guaranteed to exist by the fourth precondition---in order to prove Claim~\ref{claim:coupf1-in-np}. 

    By using similar arguments as the ones used to justify the correctness of \textsc{Satisfy}-$R$, it follows that if $T_{i,j}$ is satisfied by our oracle $B$, then $N_i^B$ and $N_j^B$ do not accept complementary languages. And by running \textsc{Satisfy}-$T$, we can guarantee, whenever possible, to make $N_i^B$ and $N_j^B$ accept a common string. The construction does not directly handle the case where both reject a common string (this is because there is no obvious way in that case to guarantee that $Q_E$ and $Q_C$ each have at most $2^{n/2}$ strings). However, this does not affect our ability to prove Claim~\ref{claim:robust-equality}, which is the main reason we have the $T_{i,j}$ requirements.

    \subsubsection{Stage Construction}\label{subsub:stage-construction}

    Having described how we satisfy the $R_i$/$S_i$/$T_{i,j}$ requirements in our construction, we are now ready to give the complete stage construction.

    \noindent
    \textbf{Begin stage descriptions.}
    \begin{enumerate}
    \item \textbf{Stage 0:} Let $C(0) = E(0) = \emptyset$, and thus $B(0) = H \oplus C(0) \oplus E(0)$. %

    \item \textbf{Stage $\bm{n = 2m+1, m\in \naturalnumber}$
    (aka ``an odd stage''):}
    \begin{enumerate}
        \item We define the set $Y$ as follows: for each $y \in \Sigma^{=n}-C'(n-1)$, $y \in Y$ if and only if there is a $j\in \naturalnumberpositive$ and a string $x \in \sigmastar$ such that $(j, x, 1^{p_j(|x|)})$ is encoded in the $m$-character prefix of $y$,\footnote{We tacitly assume here that there is a polynomial-time computable and polynomial-time invertible encoding that maps tuples $(w, y, z)$, where $w \in \naturalnumber$ and $y,z$ are 
        strings, to a string $s$ over the alphabet of $C$ such that $|s| > |w| + |y|+|z|$. Such an encoding clearly exists.}
        $S_j$ is not satisfied, and $x \not\in L(N_j^{B(n-1)})$.\\
        \underline{Note:} We will later, in the proof of Claim~\ref{claim:coupf1-in-np}, argue that $\Sigma^{=n}-C'(n-1)$ is nonempty.
        \item Add each string in $Y$ to $C$.\\
        \underline{Note:} Let $y \in Y$ and let $(j, x, 1^{p_j(|x|)})$ be the tuple encoded in the $m$-character prefix of $y$ as described in the definition of $Y$.
        Notice that $N_j^{H \oplus (C(n-1) \cup Y) \oplus E(n-1)}$ on input $x$ cannot query $y$ to $C(n-1) \cup Y$, as $y$ is too long.
        Moreover, we will never add strings of length at most $n$ to $C$ in future stages.
        Therefore $x \not\in L(N_j^B) \iff x\not\in L(N_j^{H \oplus (C(n-1) \cup Y) \cup E(n-1)})$.\footnote{We are not precluding the possibility that after adding such a $y$ at stage $n$, we satisfy $S_j$ at a later stage.  But that is fine as in such a case, we would only be adding a finite number of ``solutions'' for $N_j^B$ into $C$.}
        \item Freeze out of $C$ each string of length $n$  that is not currently in $C$ (this prevents the addition of length-$n$ strings to $C$ in future stages), and proceed to stage $n+1$.
    \end{enumerate}

    \item \textbf{Stage $\bm{n = 2m, m\in\naturalnumberpositive}$, (aka ``an even stage'')}:
    \begin{enumerate}
    \item If $n$ is not a tower length, 
    freeze all strings of length $n$ out of $E$ and 
    proceed to stage $n+1$. 
    \item Otherwise, let $n = e(k)$. 
    \item If $k \equiv 0 \pmod 3$ and there is a positive integer $i \leq n$ such that $R_i$ has not yet been satisfied, but $i$ and $n$ satisfy the preconditions of the procedure \textsc{Satisfy}-$R$ (defined in Algorithm~\ref{alg:satisfy-r}), then run \textsc{Satisfy}-$R$.

    \item If $k \equiv 1 \pmod 3$ and there is a positive integer $i \leq n$ such that $S_i$ has not yet been satisfied, 
    but $i$ and $n$ satisfy the preconditions of the procedure \textsc{Satisfy}-$S$ (defined in Algorithm~\ref{alg:satisfy-s}), then run \textsc{Satisfy}-$S$.
    
    \item If $k \equiv 2 \pmod 3$ and there are positive integers $i$ and $j$, $j < i \leq n$, such that $T_{i,j}$ has not yet been satisfied, 
    but $i$, $j$, and $n$ satisfy the preconditions of the procedure \textsc{Satisfy}-$T$ (defined in Algorithm~\ref{alg:satisfy-t}), then run \textsc{Satisfy}-$T$.
    \item Freeze out of $E$ each string of length $n$ that is not currently in $E$ (this prevents the addition of length-$n$ strings to $E$ in future stages), and proceed to stage $n+1$.
    \end{enumerate}
    \end{enumerate}
    \noindent
    \textbf{End stage descriptions.}

    As mentioned earlier, the proof of correctness for the oracle is deferred to the appendix.

\section{Related Work}\label{s:related}

It is important not to conflate our issue of (un)linked fates of two
classes---whether they stand or fall together as to collapsing to $\p$---with the different issue of whether the two classes can be
separated in some relativized world.
A clear example of this 
is the constant-ambiguity case mentioned in Section~\ref{s:results}.
Watanabe~\cite{wat:j:hardness-one-way} proved 
that if $\p=\upleq{1}$ then $\p=\upleq{2}$.  And that
implication holds in the
real world and every relativized world.  Nonetheless, there are
oracles $A$ for which $\upleq{1}^A \neq \upleq{2}^A$ holds, and indeed that
separation holds with probability one relative to a random
oracle~(\cite{bei:c:up1}, the probability-one proof there is flawed,
but a corrected proof has been obtained~\cite{bei:perscomm:dec93});
however, again by Watanabe's relativizable 
result, relative to that oracle $A$, it
certainly cannot hold that $\p^A = \upleq{1}^A$.

Historically, the first person to show a pair of classes whose
ambiguity levels were distant enough that in some relativized world
their fates were not linked was Rackoff, who in the early 1980s showed
that there is an oracle $B$ relative to which (and recall that
$\np = \upleq{2^{n^{\bigo(1)}}}$)
$\p^B = \up^B \neq \np^B$~\cite{rac:j:rel}. 
This was substantially tightened in a paper
by Fortnow and Rogers~\cite{for-rog:j:separability} (itself building on techniques of Blum and Impagliazzo~\cite{blu-imp:c:generic} and Hartmanis and Hemachandra~\cite{har-hem:j:one-way-iso,har-hem:j:rob}) which, 
although it is not explicitly stated in the paper, builds an oracle
$A$ such that
$\p^A=\up^A\neq
\fewp^A$~\cite{for:perscomm:p-eq-up-neq-fewp}.  
In fact, Fortnow and Rogers effectively show that relative to a ``size-bounded generic oracle''
$A$, $\p^A=\up^A\neq \upleq{n}^A$, but as we have shown in our Corollary~\ref{c:concrete-examples}, $\upleq{n}$ and $\fewp$ have linked fates, not just in the real world, but also in every relativized world.

Our main oracle result is proven by carefully taking the
methods
of Baker, Gill, and Solovay~\cite{bak-gil-sol:j:rel}, Blum and Impagliazzo~\cite{blu-imp:c:generic}, and Hartmanis and Hemachandra~\cite{har-hem:j:up,har-hem:j:rob}
as far as they can go without running
hopelessly afoul of Watanabe's linked-fates result and without fatally interfering with each other. Moreover, Theorem~\ref{t:constant-addition} proves a stronger result than that of Watanabe~\cite{wat:j:hardness-one-way} by using a different, two-step approach to ``poison'' an additional path even within superconstant ambiguity settings.

\section{Conclusion and Open Problems}

In this paper, we finally close the open problem of strengthening beyond constant-ambiguity versions of $\np$ Watanabe's 1988~\cite{wat:j:hardness-one-way} linked-fates
result.
We establish as Theorem~\ref{t:positive-strengthening} one new such class of cases.
We also prove
that any strengthening of our work cannot hold robustly, i.e., in every relativized world. 

Our positive result is proven using (a)~a path-poisoning approach that is not limited to the constant-ambiguity case and (b)~the power of padding, in particular to raise the ambiguity bound from $f(n)$ to $f(n^k)$ for monotonically nondecreasing functions $f$. Indeed, Theorem~\ref{t:positive-strengthening} draws on both (a) and (b), via Theorems~\ref{t:constant-addition} and~\ref{t:poly-stretch}, respectively. Aside from these methods that we found, it would be interesting to see if there are other techniques that can create linked fates and if they can be combined with our methods.

While our oracle results are interesting in their own right, we of course would prefer proving that structural collapses, e.g., of the polynomial hierarchy, follow from our assumptions. However, this is an area that has long proven difficult for the field, since even certain dramatic assumptions, such as $\p=\up$ or even $\up = \np$, are not known to imply the collapse of the polynomial hierarchy. 
Indeed, regarding the $\p=\up$ assumption, Blum and
Impagliazzo~\cite{blu-imp:c:generic} showed
that there is a relativized
world in which $\p = \np\cap\conp = \up$ yet the polynomial hierarchy
is infinite.  
However, there is no known relativized world where $\up=\np$ and the polynomial hierarchy is infinite~\cite{hem-ram-zim:j:worlds-to-die-for,for:j:die-harder}.
And it is open whether there
is an oracle world in which $\up$ is in the high hierarchy and $\ph$ is
infinite---in fact, Sheu and Long~\cite[p.~425]{lon-she:j:up-low-high} explicitly
mentioned this open problem in~1996, and to the best of our knowledge
it remains open even now, decades later. 
\bibliographystyle{plainurl}

\clearpage
\appendix 

\section{Deferred Note and Proofs}

\subsection{Deferred Note}

\begin{appendixnote}\label{app:encoding}\rm
Under most natural encodings, $\sat$ minus all its
strings whose lengths are perfect squares is such a set.
Or, to give a different, and more general and formal approach 
to giving such a set $B$: There are \np-complete 
sets over the alphabet $\{0,1\}$. For any such set, call it $A$,
we can take $B$ to be 
the set that at each perfect-square length contains no strings,
and that at each other length, $j$, (a)~for each $1 \leq i \leq 2^j-1$, $B$
contains the lexicographically $i$th string of length $j$ exactly 
if the lexicographically $i$th string in $\{0,1\}^*$ belongs to $A$;
and (b)~$1^j\not\in B$. That is, we use the membership in $B$ of 
the first 
$2^j-1$ length-$j$ strings to transparently encode 
the membership status in $A$
of the strings of the lengths in the range 0 through $j-1$; this type 
of packing has been very useful in earlier papers~\cite{bel-gol:j:s-vs-d,fal-ogi:j:self}. It is clear that such a set $B$ is \np-complete.
\end{appendixnote}


\subsection{Deferred Proofs of Positive Results}

For the reader's convenience, we restate the corresponding theorem of each deferred proof in this subsection.

\polystretch*
\begin{proof}
  Fix $k$ and $f$ to satisfy the theorem's requirements.
  The left-to-right direction follows from Theorem~\ref{t:padding} and the fact that 
  $\p$ is closed downwards under polynomial-time many-one
  reductions.
  Because $f$ is monotonically nondecreasing, it follows that $(\forall n)[f(n) \leq f(n^k)]$, and so the right-to-left direction holds by Fact~\ref{f:trivial}.
\end{proof}

\positivestrengthening*
\begin{proof}
Fix $f$ and $k$
to satisfy the requirements of the theorem statement.

For the right-to-left direction, for every function $j(n)$ that is $\bigoh(1)$, $j(n)$ is positive by the definition of big-Oh (see \cite[Chapter~7]{sip:b:introduction-third-edition})
  and by the monotonicity of $f$, $(\forall n\in\naturalnumber)[f(n^k) + j(n) \geq f(n)]$. So the right-to-left direction holds by Fact~\ref{f:trivial}.

We now prove the left-to-right direction. Assume that $\p=\upleq{f(n)}$.
Since $f$ is monotonically nondecreasing, it follows that $\p = \upleq{f(n^k)}$, by Theorem~\ref{t:poly-stretch}. Moreover, since $f(n^k)$ is a function from $\naturalnumber$ to $\mathbb{R}^{\geq 1}$, it follows from Theorem~\ref{t:constant-addition} that $\p=\upleq{f(n^k) + \bigoh(1)}$.
\end{proof}

\subsection{Deferred Proof of Correctness of Oracle \boldmath$B$}
In this deferred section, we prove the correctness of the oracle $B$ constructed in the proof of Theorem~\ref{t:stretch-oracle}.

    Notice that during the even stages, the only strings that are added to the oracle are strings that have tower lengths. 
    Moreover, at the end of each even stage $n$, we freeze out of $C$/$E$ the length-$n$ strings that are not in $C$/$E$.
    Therefore, if a requirement is satisfied at stage $n$, then the satisfaction of that requirement 
    cannot be changed by future stages.
    Moreover, note that requirements that have infinitely many opportunities to be satisfied will eventually be satisfied.

    \begin{property}\label{property:p-neq-upf2}
        $\p^B \subsetneq \upleq{f_2(n)}^B$.
    \end{property}
    \begin{proof}
    For every oracle $A$, $\p^A \subseteq \upleq{f_2(n)}^A$, therefore $\p^B \subseteq \upleq{f_2(n)}^B$. Also, since we never put more than $f_2(n)$ strings of length $n$ in $E$, it follows that $L_E \in \upleq{f_2(n)}^B$.

    First observe that $\p^B = \{L(M_i^B)\}_{i\in \naturalnumberpositive}$. 
    Let $i \in \naturalnumberpositive$.
    It follows from Lemma~\ref{l:inf-oft-exp} that $p_i(n)+\floor{f_1(n)} < 2^{n/2}$ for infinitely many values of $n$, and so $R_i$ is satisfiable during infinitely many stages.
    Therefore $R_i$ will eventually be satisfied. In other words, $L(M_i^B) \neq L_E$\@.
    Since every $R_i$ requirement is satisfied, no $M_i$ with oracle $B$ accepts $L_E$, and so $L_E \not\in \p^B$.
    \end{proof}

    The rest of this proof is about proving that $\p^B = \upleq{f_1(n)}^B$. %
    Let $\calC_{f_1}^B = \{L(N_i^B) \mid N_i \text{ is robustly } f_1\text{-categorical}\}$.
    Moreover, let 
    ${\calC}_{\text{compl.}}^B = \{L(N_i^B) \mid (\exists j \in \naturalnumberpositive)(\forall A)[L(N_i^A) = \overline{L(N_j^A)}]\}.$

    \begin{claim}\label{claim:robust-equality}
        ${\calC}_{f_1}^B = \upleq{f_1(n)}^B$ and $\calC_{\rm{compl.}}^B = \np^B \cap \conp^B$.
    \end{claim}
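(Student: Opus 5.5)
Each of the two equalities has an easy inclusion and a harder one, and the plan treats both in the same way.

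\textbf{Easy inclusions.} Let $N_i$ be robustly $f_1$-categorical. Then in particular $N_i^B$ is $f_1$-categorical, so $L(N_i^B) \in \upleq{f_1(n)}^B$. This gives ${\calC}_{f_1}^B \subseteq \upleq{f_1(n)}^B$.

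Likewise, suppose $N_i$ and $N_j$ robustly accept complementary languages. Then $L(N_i^B)\in\np^B$ and its complement $L(N_j^B)\in\np^B$. This gives $\calC_{\rm compl.}^B \subseteq \np^B\cap\conp^B$.

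\textbf{Harder inclusions: clocking.} For the reverse inclusions I would use the standard clocking/self-checking trick for enumerations. The key is to build the machine so that the desired property holds relative to every oracle, while agreeing with the given machine relative to $B$.

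Take $L\in\upleq{f_1(n)}^B$, witnessed by some $N_k^B$ that has at most $f_1(n)$ accepting paths on each input relative to $B$. Build a new oracle NPTM $N'$ as follows.
\begin{itemize}
\item On input $x$, it simulates $N_k$ on $x$ but with a built-in ``budget.''
\item Since at most $\floor{f_1(|x|)}$ accepting paths can be allowed, $N'$ cannot count its own accepting paths directly.
\end{itemize}

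I would not try to check global path counts. The natural fix is to rely on what makes the oracle work, not on a syntactic restriction. By construction, an $N_i$ that is not robustly $f_1$-categorical either has $S_i$ satisfied, so $N_i^B$ is not $f_1$-categorical, or has $S_i$ unsatisfied. The unsatisfied case occurs only when, for infinitely many tower lengths $n$, no admissible $U_E$ makes $N_i$ exceed $\floor{f_1}$ paths.

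\textbf{Main obstacle.} So the real task is to show the following. If $N_k^B$ is $f_1$-categorical, then $L(N_k^B)=L(N_i^B)$ for some robustly $f_1$-categorical $N_i$. I would construct $N_i$ to simulate $N_k$ while answering queries to the $E$ part of the oracle according to an actually-finite patch; $C$ and $H$ are handled similarly. Then I would argue via the failure of the fourth precondition of \textsc{Satisfy}-$S$ that no oracle pushes it over $f_1$. This is the step I expect to be hardest. Precondition failure only quantifies over small $U_E$ at tower lengths, so robustness over all oracles must be obtained by hard-wiring $B$'s structure into the machine: reject whenever $E$-queries at non-tower lengths return ``yes,'' or when more than $f_2(n)$ strings would be used.

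The $\calC_{\rm compl.}$ half would be handled analogously, using the failure of \textsc{Satisfy}-$T$'s precondition. No admissible $U_E$ makes both machines accept a common string. Combined with the hard-wiring above, the patched pair never accepts a common string. I would then add a total-rejection fallback so that, relative to every oracle, the patched machines accept exactly complementary languages. The care needed to make ``both reject'' impossible robustly is the delicate point, and I would address it by letting $N_j$'s patched version accept precisely when $N_i$'s patched version rejects under the hard-wired $H$-based check.
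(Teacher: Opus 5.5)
Your two easy inclusions are fine, but the proposal has a genuine gap exactly at the step you flag, and the repairs you sketch do not close it. Failure of the fourth precondition of \textsc{Satisfy}-$S$ at stage $n$ is a statement only about the oracles $H\oplus C(n-1)\oplus(E(n-1)\cup U_E)$: the true $H$ and the true $C(n-1)$ and $E(n-1)$, changed only by a set $U_E$ of at most $f_2(n)$ strings at the single length $n$. It therefore says nothing about oracles whose $H$- or $C$-component differs from $B$'s, and robustness quantifies over all of those. Your hard-wiring constrains only the $E$-component. Even there, ``more than $f_2(n)$ strings of length $n$'' is a global property of the oracle that no single polynomial-time path can detect. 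Meanwhile $H$ ($\pspace$-complete) and $C$ (an infinite set defined through simulations of the $N_j^{B(n-1)}$) cannot be hard-wired into a polynomial-time machine at all. Concretely, let $g$ reduce $\overline{H}$ to $H$. Let $N_k$ query $0x$ and $0g(x)$, branch into $2^{|x|}$ accepting paths if both answers are ``yes,'' and otherwise accept on a single path iff $0x$ was answered ``yes.'' Then $N_k^B$ has at most one accepting path on every input, and no \textsc{Satisfy}-$S$ attempt can ever succeed for it, since no stage varies $H$. Your $E$-checks never fire. Yet relative to any oracle whose first component is $\sigmastar$, $N_k$ exceeds $f_1$ at infinitely many lengths (Lemma~\ref{l:inf-oft-exp}). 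A robust witness for $L(N_k^B)=H$ does exist, but it is a \emph{different} (deterministic) machine, not a patched $N_k$.

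The complementary half has the same problem in sharper form. ``Let $N_j$'s patched version accept precisely when $N_i$'s patched version rejects'' is a co-nondeterministic condition that an NPTM cannot implement in general. Routing it through the $H$-component presupposes that the oracle's first component \emph{is} $H$, and $H$ by itself cannot evaluate a machine that also queries the other components. The construction also never tries to make a pair both reject, so failure of \textsc{Satisfy}-$T$'s precondition gives no handle on that case. In fact, a deterministic machine and its complement are always robustly complementary, and Corollary~\ref{c:p-np-conp} (with $A=H$, $O=C\oplus E$) puts every robustly complementary pair's language in $\p^B$. Hence $\calC_{\rm compl.}^B=\p^B$ outright, and the second equality is \emph{equivalent} to $\np^B\cap\conp^B=\p^B$. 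It can only come from an argument that actually places $L(N_i^B)$ in $\p^B$, not from repackaging $N_i,N_j$. For comparison, the paper's route is close to yours. $S_i$ gets only finitely many chances while Lemma~\ref{l:inf-oft-exp} makes the third precondition hold infinitely often, so the exceptional input set $I$ is finite and is handled by a finite table. However, its robustness step is likewise only asserted: its $\hat N^A$ ``performs the same steps as $N_i^B$,'' which a machine with oracle $A$ cannot do. You have correctly located the crux, but neither your hard-wiring nor the finite patch supplies the missing argument.
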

    \begin{proof}
    Clearly ${\calC}^B_{f_1} \subseteq \upleq{f_1(n)}^B$, we focus on the proof that $\upleq{f_1(n)}^B \subseteq {\calC}^B_{f_1}$.
    Let $L \in \upleq{f_1(n)}^B$ and observe that for some $i \in \naturalnumberpositive$, $N_i^B$ is $f_1$-categorical and $N_i^B$ accepts $L$.
    Next, recall if requirement $S_i$ had had infinitely many opportunities to be satisfied in the oracle construction, then $N_i^B$ would not be $f_1$-categorical.
    But, since $N_i^B$ is $f_1$-categorical, it follows that $S_i$ only had finitely many opportunities to be satisfied in the oracle construction. If $N_i$ is robustly $f_1$-categorical, then our proof is done.
    So let us now suppose 
    that $N_i$ is not robustly $f_1$-categorical.
    Let us now inspect the conditions that must be true for $S_i$ to be satisfiable.

    It follows from Lemma~\ref{l:inf-oft-exp} that $p_i(n)\left(\floor{f_1(n)}+1\right) < 2^{n/2}$ is true infinitely often.
    Therefore, the only way for $N_i$ to only have had finitely many opportunities to be made not $f_1$-categorical is if
    there were only a finite number of strings $x$ for which there were suitable sets $\hat{C}$ and $\hat{E}$ such that $N_i^{H\oplus \hat{C} \oplus \hat{E}}$ on input $x$ has more than  $f_1(|x|)$ accepting paths.
    Let the set of such strings be $I$. We will now argue that there is an $N_\ell$ that is robustly $f_1$-categorical such that $L(N_\ell^B) = L(N_i^B)$.

    Let $\hat{N}$ be an nondeterministic oracle that does the following when its oracle is some set $A$. 
    If $x \in I$, then using a finite-sized table $\hat{N}^A$ accepts if $x \in L(N_i^B)$ and rejects otherwise. 
    If $x\not\in I$, then $\hat{N}^A$ performs the same steps as $N_i^B$ on input~$x$, and accepts/rejects accordingly.

    First notice that for each oracle $A$, $L(\hat{N}^A) = L(N_i^B)$. Also notice that for each oracle $A$ and input $x$, $\acc_{\hat{N}^A}(x) \leq \acc_{N_i^B}(x)$ and $L(\hat{N}^A) = L(N_i^B)$. Therefore $\hat{N}$ is robustly $f_1$-categorical.
    Moreover, it's not too hard to see---from the way $\hat{N}$ is defined---that there is an $N_\ell$ so that for every oracle $A$, $N_\ell^A$ and $\hat{N}^A$ effectively perform the same steps on each input. 
    Therefore $N_\ell^B$ also accepts $L$ and $N_\ell^B$ is also robustly $f_1$-categorical.
    So we can conclude that $L \in {\calC}_{f_1}^B$.

    Using a similar argument, $\np^B \cap \conp^B = \calC_{\text{compl.}}$.
    \end{proof}

    We will now show that satisfying the $S_i$ 
    requirements whenever possible helps us add useful strings to $C$ that we can leverage to establish that $\coup_{\leq f_1(n)}^B \subseteq \np^B$.

    \begin{claim}\label{claim:coupf1-in-np}
        $\coup_{\leq f_1(n)}^B \subseteq \np^B$. 
    \end{claim}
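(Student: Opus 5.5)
Fix $L \in \coup_{\leq f_1(n)}^B$, so $\overline{L} \in \upleq{f_1(n)}^B$. By Claim~\ref{claim:robust-equality}, there is an $N_j$ that is robustly $f_1$-categorical with $L(N_j^B) = \overline{L}$. Robust $f_1$-categoricity means that, at every even stage at a tower length with $p_j(n)(\floor{f_1(n)}+1) < 2^{n/2}$, the fourth precondition of \textsc{Satisfy}-$S$ fails for $j$. Hence $S_j$ is never satisfied, and in the odd stages every tuple $(j,x,1^{p_j(|x|)})$ remains eligible.

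The $\np^B$ machine for $L$ works as follows on input $x$. Compute the encoding $w$ of $(j,x,1^{p_j(|x|)})$ and let $m=|w|$ (padding so that $w$ is exactly an $m$-character prefix). Set $n=2m+1$, guess a string $y$ of length $n$ with prefix $w$, and accept iff $y \in C$, queried through $B$. This runs in polynomial time.

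Soundness: a string $y$ with this prefix is added to $C$ only at odd stage $n$, and only if $x \notin L(N_j^{B(n-1)})$. The note in the odd stage says that membership of $x$ in $L(N_j)$ does not change afterwards. The justification is that $N_j$ on $x$ queries only strings shorter than $n$; all shorter $C$-strings are settled by stage $n$, and $E$-strings of later lengths are too long to be queried. This last point must be checked, since $p_j(|x|) < m < n$. So $y \in C$ implies $x \notin L(N_j^B)$, i.e.\ $x \in L$.

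Completeness: if $x \in L$, then $x \notin L(N_j^{B(n-1)})$, since that membership already agrees with $B$. Moreover $S_j$ is unsatisfied, so every length-$n$ string with prefix $w$ that is not in $C'(n-1)$ is placed in $C$. It therefore suffices that at least one such string avoids $C'(n-1)$. This is the main obstacle: bounding the frozen-out strings of length $n$.

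Strings are frozen out of $C$ at length $n$ only by even stages at lengths $< n$, through the $Q_C$ sets of \textsc{Satisfy}-$R$/$S$/$T$. Odd stage $n$ freezes its own length only after step~(a). Each such even stage is run at most once per tower length. By the third preconditions, each contributes fewer than $2^{n'/2}$ frozen strings, where $n' \le n-1$ is its tower length. Tower lengths are so sparse ($e(k+1)=2^{2^{e(k)}}$) that there is essentially one relevant tower length $n' < n$ whose queries can reach length $n$. In total, fewer than $2 \cdot 2^{n'/2} \le 2^{n/2+1}$ strings are frozen.

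Meanwhile, the number of length-$n$ strings with the fixed $m$-prefix is $2^{n-m}=2^{m+1}$, and $n/2 + 1 < m+1$ fails only marginally ($n/2 = m + 1/2$). So I would tighten the bookkeeping: either choose the prefix length a bit smaller in the encoding convention, or use the fact that a query at length $n$ from a stage at tower length $n' < n$ requires $p_i(n') \ge n$, so freezes at length $n$ number at most $p_i(n') < 2^{n'/2} \le 2^{(n-1)/2}$. This is less than $2^{m+1}$, which leaves a free string. Hence $C$ contains a witness, the machine accepts, and $L \in \np^B$.
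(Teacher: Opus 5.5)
Your proof is correct and follows the paper's argument. You use Claim~\ref{claim:robust-equality} to pass to a robustly $f_1$-categorical $N_j$, so that $S_j$ is never satisfied; you then guess a length-$(2m+1)$ extension of the encoded tuple $(j,x,1^{p_j(|x|)})$ and query $C$; and you show that some extension escapes $C'(n-1)$. Your counting detour is unnecessary: the paper simply sums the per-stage bounds $2^{m'}$ over all even stages $2m' < n$ to get fewer than $2^{m+1}$ frozen strings, and your own ``fewer than $2\cdot 2^{n'/2}$'' bound already suffices once you use $n' \le 2m$ rather than $n' \le n$.
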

    \begin{proof}
    We prove this
    by first seeing
    why the odd stages are correctly adding to $C$ strings that can be used to test for membership for $\coup_{\leq f_1(n)}^B$ languages in nondeterministic polynomial-time relative to oracle $B$.

    At any given even stage $n'=2m'$, if $n'$ is a tower length $e(k)$, then by the conditions that allow any requirement to be satisfiable, a machine simulated in that given stage queries less than $2^{e(k)/2} = 2^{m'}$ strings.
    So at the start of an odd stage $n'' = 2m''+1$, the number of strings of length $2m''+1$ that can be in $C'(n'')$ is at most $2^{1} + \cdots + 2^{m''} < 2^{m''+ 1}$. Thus during stage $n''$, each length-$m''$ string encoding tuples of the form $(k, x, 1^{p_k(|x|)})$ is the prefix of at least one string of length $2m''+1$ that is not in $C'(n'')$.
    So if $S_k$ is never satisfied, every such tuple is added to $C$ as the prefix of some string during an odd stage.

    By Claim~\ref{claim:robust-equality}, ${\calC}_{f_1}^B = \upleq{f_1(n)}^B$, so it suffices to show that 
    if $S_k$ is never satisfied, i.e., $N_k^B$ is $f_1$-categorical, then $\overline{L(N_k^B)} \in \np^B$. Indeed, consider the following nondeterministic oracle Turing machine $N^B$---which does not need to be in our 
    standard enumeration
    $N_1, N_2, N_3, \ldots$ from Section~\ref{s:defs}---with oracle $B$ on input $x$. 
    The machine
    $N^B$ in polynomial time computes the string $y$ that encodes $(k, x, 1^{p_k(|x|)})$, 
    nondeterministically guesses a string $z$ of length $|y|+1$, and queries the oracle to ask if $yz \in C$. If the oracle responds ``yes,'' then $N_k^B$ rejects $x$ and so $N^B$ accepts on the current path. If the oracle responds ``no,'' then $N_k^B$ accepts $x$ and so $N^B$ rejects on the current path.

    Note that $N_k^B$ accepts $x$ if and only if $N^B$ has no accepting path on $x$; in other words, $x \in L(N^B) \iff x \not\in L(N_k^B)$.
    Therefore $L(N^B) = \overline{L(N_k^B)}$ and $\coup_{\leq f_1(n)}^B \subseteq \np^B$.
    \end{proof}

    Finally, taking stock of all the results we have established thus far, we establish the desired property of our oracle that $\p^B = \upleq{f_1(n)}^B$.

    \begin{property}\label{property:p-eq-upf1}
        $\p^B = \upleq{f_1(n)}^B$.
    \end{property}
    \begin{proof}
    By Corollary~\ref{c:p-np-conp} and the fact that $\p^H = \np^H$, it follows that each $L \in \calC_{\text{compl.}}^B$ is also in $\p^B$ (by taking $A=H$ and $O = C \oplus E$).
    By Claim~\ref{claim:robust-equality}, it thus follows that $\p^B = \np^B \cap \conp^B$.

    By Claim~\ref{claim:coupf1-in-np}, $\coup_{\leq f_1(n)}^B \subseteq \np^B$. Equivalently, $\upleq{f_1(n)}^B \subseteq \conp^B$. By definition, $\upleq{f_1(n)}^B \subseteq \np^B$. Therefore $\p^B \subseteq \upleq{f_1(n)}^B \subseteq \np^B \cap \conp^B = \p^B$. 
    \end{proof}

    We have thus obtained an oracle with the correct properties.
    Moreover, if the functions $\floor{f_1(n)}$ and $\floor{f_2(n)}$ are both computable, then the oracle is recursive.

\end{document}